\title{Expected Linear Round Synchronization: The Missing Link for Linear Byzantine SMR \\ \smaller[2](Extended Version)} 
\titlerunning{Expected Linear Round Synchronization: The Missing Link for Linear Byzantine SMR} 
\author{Oded Naor}{Technion}{}{}{}
\author{Idit Keidar}{Technion}{}{}{}
\authorrunning{O. Naor and I. Keidar} 
\keywords{Distributed Systems, State Machine Replication} 
\g@addto@macro{\UrlBreaks}{\UrlOrds}
\newcolumntype{L}{>{\raggedright\arraybackslash}X}
\DeclareMathAlphabet{\mathcal}{OMS}{cmsy}{m}{n}
\newtheorem{property}{Property}
\newtheorem{observation}{Observation}
\let\cref@old@stepcounter\stepcounter
\def\stepcounter#1{%
	\cref@old@stepcounter{#1}%
	\cref@constructprefix{#1}{\cref@result}%
	\@ifundefined{cref@#1@alias}%
	{\def\@tempa{#1}}%
	{\def\@tempa{\csname cref@#1@alias\endcsname}}%
	\protected@edef\cref@currentlabel{%
		[\@tempa][\arabic{#1}][\cref@result]%
		\csname p@#1\endcsname\csname the#1\endcsname}}
\newcommand*{\textoverline}[1]{$\overline{\hbox{#1}}\m@th$}
\crefname{algocf}{alg.}{algs.}
\Crefname{algocf}{Alg.}{Alg.}
\crefname{equation}{eq.}{eqs.}
\Crefname{equation}{Eq.}{Eqs.}
\crefname{table}{table}{tables}
\Crefname{table}{Table}{Table}
\crefname{definition}{def.}{defs.}
\Crefname{definition}{Def.}{Defs.}
\newcommand{\eg}{e.g.\xspace}
\newcommand{\ie}{i.e.\xspace}
\newcommand{\curr}{\ensuremath{\textit{curr\_round}}\xspace}
\newcommand{\dissTime}{\ensuremath{\delta}\xspace}
\newcommand{\commentx}[1]{/* #1 */}
\newcommand{\nextVar}{\ensuremath{\textit{next\_round}}\xspace}
\newcommand{\send}{\ensuremath{\textbf{send}}\xspace} 
\newcommand{\broadcast}{\ensuremath{\textbf{broadcast}}\xspace} 
\newcommand{\relay}[1]{\ensuremath{\textsc{Relay}(#1)}\xspace}
\newcommand{\wishToAdvance}{\ensuremath{\textsf{advance}()}\xspace}
\newcommand{\wishToAdvanceSubscript}[1]{\ensuremath{\textsf{advance}_{#1}()}\xspace}
\newcommand{\proposeRound}[1]{\ensuremath{{\textsf{ new\_round}}(#1)}\xspace} 
\newcommand{\proposeRoundSubscript}[2]{\ensuremath{{\textsf{ new\_round}}_{#2}(#1)}\xspace}
\newcommand{\newLeader}[1]{\ensuremath{{\textsf{round\_leader}}(#1)}\xspace}
\newcommand{\newLeaderSubscript}[2]{\ensuremath{{\textsf{round\_leader}_{#2}}(#1)}\xspace}  
\newcommand{\newRound}[1]{\ensuremath{\left\langle\textsf{pre-commit},#1\right\rangle}\xspace}
\newcommand{\finalize}[1]{\ensuremath{\left\langle\textsf{finalize},#1\right\rangle}\xspace} 
\newcommand{\finalizeAck}[1]{\ensuremath{\left\langle \overline{\textsf{finalize}} ,#1\right\rangle}\xspace} 
\newcommand{\wish}{\newRound}
\newcommand{\vote}[1]{\ensuremath{\left\langle \textsf{commit},#1 \right\rangle} \xspace} 
\newcommand{\TC}[1]{%
	\ifthenelse{\isempty{#1}}%
	{\ensuremath{\left\langle \overline{\textsf{pre-commit}}\right\rangle}\xspace}
	{\ensuremath{\left\langle \overline{\textsf{pre-commit}},#1\right\rangle}\xspace}
}
\newcommand{\QC}[1]{%
	\ifthenelse{\isempty{#1}}%
	{\ensuremath{\left\langle \overline{ \textsf{commit}} \right\rangle}\xspace}
	{\ensuremath{\left\langle \overline{\textsf{commit}},#1\right\rangle}\xspace}
}
\newcommand{\GST}{\ensuremath{\text{GST}}\xspace}
\newcommand{\process}[1]{\ensuremath{\mathcal{P}_{#1}}\xspace}
\newcommand{\finalized}{\ensuremath{\textit{finalized}}\xspace}
\newcommand{\tempRelay}{\ensuremath{\textit{round\_relay}}\xspace}
\newcommand{\true}{\ensuremath{\texttt{True}}\xspace}
\newcommand{\false}{\ensuremath{\texttt{False}}\xspace}
\newcommand{\rMax}[1]{\ensuremath{\textit{r\_max}(#1)}\xspace}
\newcommand{\syncTime}[1]{\ensuremath{\textit{sync\_time}(#1)}\xspace}
\newcommand{\leader}[1]{\ensuremath{\textit{Leader}(#1)}\xspace}
\let\orgdescriptionlabel\descriptionlabel
\renewcommand*{\descriptionlabel}[1]{%
	\let\orglabel\label
	\let\label\@gobble
	\phantomsection
	\edef\@currentlabel{#1}%
	\let\label\orglabel
	\orgdescriptionlabel{#1}%
}
\newcommand*\circled[1]{\tikz[baseline=(char.base)]{
		\node[shape=circle,draw=black!80, line width=0.3mm, inner sep=2pt] (char) {\textcolor{black}{#1}};}}
\begin{document}
\maketitle
\begin{abstract}
State Machine Replication (SMR) solutions often divide time into rounds, with a designated leader driving decisions in each round.
Progress is guaranteed once all correct processes \emph{synchronize} to the same round, and the leader of that round is correct.
Recently suggested Byzantine SMR solutions such as HotStuff, Tendermint, and LibraBFT achieve progress with a linear message complexity and a constant time complexity once such round synchronization occurs.
But round synchronization itself incurs an additional cost.
By Dolev and Reischuk's lower bound, any deterministic solution must have $\Omega(n^2)$ communication complexity. Yet the question of randomized round synchronization with an expected linear message complexity remained open.

We present an algorithm that, for the first time, achieves round synchronization with expected linear message complexity and expected constant latency. Existing protocols can use our round synchronization algorithm to solve Byzantine SMR with the same asymptotic performance.
\end{abstract}

\sloppy
\section{Introduction}
Byzantine \emph{State Machine Replication (SMR)} has received a lot of attention in recent years due to the increasing demand for robust and scalable systems. 
In order to tolerate periods of high load or even denial-of-service attacks, practical solutions commonly assume the \emph{eventual synchrony} model~\cite{dwork1988consensus}, meaning that they guarantee consistency despite asynchrony and make progress during periods when the network is synchronous.
Examples of such systems include PBFT~\cite{castro1999practical}, SBFT~\cite{gueta2019sbft}, LibraBFT~\cite{baudet2019librabft}, HotStuff~\cite{yin2019hotstuff}, Zyzzyva~\cite{kotla2007zyzzyva}, Tendermint~\cite{buchman2018tendermint}, and many more.
Eventually synchronous SMR solutions typically iterate through a sequence of \emph{rounds}, (also called views), wherein a designated \emph{leader} process tries to drive all correct processes to consensus.
The main complexity of such algorithms arises whenever a new round begins and its (new) leader collects information about possible consensus decisions in previous rounds.

When using such protocols, it is common to constantly advance in rounds with a rotating leader~\cite{castro1999practical,yin2019hotstuff,baudet2019librabft}; this is because when the leader is faulty, it is possible for some processes to perceive progress while others made no progress.

In the last couple of years, there has been a race to improve the performance of Byzantine SMR.
Recent algorithms such as Tendermint~\cite{buchman2018tendermint}, Casper~\cite{buterin2017casper}, HotStuff~\cite{yin2019hotstuff}, and LibraBFT~\cite{baudet2019librabft} allow rounds to advance (and leaders to be replaced) with a constant time complexity and a linear message complexity.
Thus, even if every consensus instance is led by a different leader, the message complexity for each decision remains linear. 
Nevertheless, the linear message complexity is achieved only \emph{after} all correct processes synchronize to execute the same round of the protocol (provided that that round’s leader is correct).
And such \emph{round synchronization} has a cost of its own.
In Tendermint, round advancement is gossiped throughout the system, entailing an expected $O(n\log n)$ message complexity with expected $O(\log n)$ latency.
In HotStuff, it is delegated to a separate round synchronization module called \emph{PaceMaker}, whose implementation is left unspecified.
And in LibraBFT, this module is implemented with quadratic message complexity, which in fact matches Dolev and Reischuk's $\Omega(n^2)$ communication complexity lower bound~\cite{dolev1985bounds} on deterministic Byzantine consensus.
Later work on Cogsworth~\cite{naor2019cogsworth} implemented a randomized PaceMaker with expected constant latency and linear message complexity under benign failures, but with expected quadratic message complexity in the Byzantine case.

In this work we present a new round synchronization algorithm that achieves expected constant time complexity and expected linear communication complexity even in the presence of Byzantine processes.
Specifically, under an oblivious adversary, we guarantee these bounds on the expected time/message cost until all processes synchronize to the same round from an \emph{arbitrary} state of the protocol.
Under a strong adversary, we achieve the same bounds but on the \emph{average} expected time and message cost until round synchronization over all states occurring in an infinite run of the protocol.
To this end, we decompose the round synchronization module into a \emph{synchronizer} abstraction and two local functions.
The synchronizer abstraction captures the essence of the distributed coordination required in order to synchronize processes to the same round.

Like previous works~\cite{naor2019cogsworth,yin2019hotstuff,baudet2019librabft}, the main technique used in our algorithm to lower the message complexity is a relay-based message distribution with threshold signatures.
Instead of broadcasting messages all-to-all, our algorithm sends each message to a designated \emph{relay}.
The relay aggregates messages from multiple processes, and when a certain threshold is met, it combines them into a threshold signature, which it sends to all the processes.
Note that a threshold signature’s size is the same as the original signature sizes, \ie, remains constant as the number of processes grows.
This leads to linear communication complexity per message.
The challenge is that the relay can be Byzantine and, for example, send the aggregated message only to a subset of the processes.
Another challenge arises when some correct process advances to a new round while others lag behind.
We introduce a relay-based linear-complexity helping mechanism to allow lagging processes to catch up with faster ones without all-to-all broadcast.

In summary, the main contribution of this paper is providing an algorithm that for the first time reduces the expected message complexity of Byzantine SMR in the presence of Byzantine faults to linear, while maintaining expected constant latency.
The rest of the paper is structured as follows: \Cref{sec:model} describes the model; \Cref{sec:problemDef} formally defines the round synchronization problem and our performance metrics; \cref{sec:decomposition} explains our decomposition of round synchronization into a synchronizer abstraction and local functions, and proves that this decomposition solves the round synchronization problem; \Cref{sec:algorithm} presents our new synchronizer algorithm and proves its expected linear message complexity, expected constant latency, and correctness; \Cref{sec:relatedWork} gives related work and \Cref{sec:conclusion} concludes the paper.

\section{Model}
\label{sec:model}

Our model consists of a set $\Pi = \left\lbrace \process{1}, \process{2}, \ldots, \process{n} \right\rbrace$ of $n$ processes.
Every two processes in $\Pi$ have a bidirectional, reliable, and authenticated link between them, \ie, every process can send a message to another process that will eventually arrive and the recipient can verify the sender's identity.
We use the term \emph{broadcast} to indicate sending a message to all processes.

We follow the eventually synchronous model~\cite{dwork1988consensus} in which there is no global clock, and every execution is divided into two periods: first, an unbounded period of asynchrony; and then, a period of synchrony, where messages arrive within a bounded time,~\dissTime.
The second period begins at a moment called the \emph{Global Stabilization Time~(\GST)}.
Messages sent before \GST arrive by $\GST + \delta$.
We assume that after \GST, processes can correctly estimate such an upper bound $\delta$ on communication latency and also measure time locally, but this does not imply a global clock.
We consider a failure model where $f < n/3$ processes may be \emph{faulty}, or \emph{Byzantine} and act arbitrarily.

We assume a shared source of randomness, $\mathcal{R}$, that is used to derive a function $\relay{r,k} \colon \mathbb{N} \times \left\{ 1, \ldots, f+1 \right\} \mapsto \Pi$.
This function is used to select for each round~$r$ the $k$-th process that will act as a relay.
The relay function satisfies the following properties: 
\begin{description}
	\item[{R1}\label{prop:relayFunc:differentProcesses}] $f+1$ different relays for each round:
	\begin{equation*}
	\forall r \in \mathbb{N}, \forall 1 \leq i < j \leq f+1 \colon \relay{r, i} \neq \relay{r, j}.
	\end{equation*}
	\item[{R2}\label{prop:relayFunc:random}] Random relay selection, while ensuring $f+1$ different relays for each round:
	\begin{multline*}
	\forall r \in \mathbb{N}, \forall 1 \leq k \leq f+1, \forall \process{1}, \process{2} \in  \Pi \setminus \bigcup_{i=1}^{k-1} \left\{ \relay{r, i} \right\} \colon \\
	{\Pr \left[ \relay{r,k} = \process{1} \right] = \Pr \left[ \relay{r,k} = \process{2} \right]}.
	\end{multline*}
\end{description} 

Note that \ref{prop:relayFunc:random} implies that the first relay is continuously rotated throughout the run, \ie, $\forall r \in \mathbb{N} \colon \bigcup_{i=r}^{\infty} \relay{i,1} = \Pi$.
Generating secure randomness as assumed by our protocol has been studied in the literature, \eg,~\cite{blum1983coin,cleve1986limits,moran2009optimally,andrychowicz2014secure}, and is beyond the scope of this paper.

For clarity of the algorithm’s presentation, we assume that the adversary is a \emph{static oblivious adversary}~\cite{borodin1992optimal,ben1994power,fiat2002competitive}, \ie, has no knowledge of the randomness~$\mathcal{R}$.
This assumption is required for the worst-case performance bounds as defined in~\Cref{sec:problemDef:performance}, and can be relaxed to a \emph{strong static adversary} if we only wish to prove an average-case bound.
We discuss this in~\Cref{sec:algs:relaxedModel} below. 

Like previous linear-complexity BFT algorithms~\cite{baudet2019librabft,buchman2018tendermint,naor2019cogsworth,yin2019hotstuff}, we use a cryptographic signing scheme, a public key infrastructure (PKI) to validate signatures, and a threshold signing scheme~\cite{boneh2001short, cachin2005random, shoup2000practical}.
The threshold signing scheme is used in order to create a compact-sized signature of $K$-of-$N$ processes as in other consensus protocols~\cite{cachin2005random}.
Usually $K = f+1$ or $K = 2f+1$.
The size of a threshold signature is constant and does depend on $K$ or $N$.
We assume that the adversary is polynomial-time bounded, \ie, the probability that it will break the cryptographic assumptions in this paper (\eg, the cryptographic signatures, threshold signatures, etc.) is negligible.
\section{Problem Definition - Round Synchronization}
\label{sec:problemDef}

We start by specifying the round synchronization problem in \Cref{sec:problemDef:specification}, then discuss performance metrics in~\Cref{sec:problemDef:performance}, and conclude by describing how to use a round synchronization module to solve consensus in~\Cref{sec:problemDef:solveConsensus}.

\subsection{Specification} \label{sec:problemDef:specification}
We define a long-lived task of \emph{round synchronization}, parameterized by the desired round duration $\Delta$.
It has a single output signal at process \process{i}, \newLeaderSubscript{r, \process{}}{i}, $r \in \mathbb{N}, \process{} \in \Pi$, indicating to \process{i} to enter round $r$ of which \process{} is the leader.
We say that a process \process{i} is \emph{in round~$r$} between the time $t$ when \newLeaderSubscript{r, \cdot}{i} occurs and the next \newLeaderSubscript{r',\cdot}{i} event after $t$.
If no such event occurs, \process{i} remains in round $r$ from $t$ onward.
The goal of round synchronization is to reach a \emph{synchronization time}, defined as follows:
\begin{definition} [Synchronization time] \label{def:syncTime}
Time $t_s$ is a \emph{synchronization time} if all correct processes are in the same round $r$ from $t_s$ to at least $t_s + \Delta$, and $r$ has a correct leader.
\end{definition}

A round synchronization module satisfies two properties.
The first ensures that in every round all the correct processes have the same leader.
\begin{property} [Leader agreement] \label{prop:leaderAgreement}
For any two correct processes $\process{i}, \process{i'}$ if \newLeaderSubscript{r,\process{j}}{i} and \newLeaderSubscript{r,\process{j'}}{i'} occur, then $\process{j} = \process{j'}$.
\end{property}

The second property ensures that synchronization times eventually occur.
Formally:

\begin{property} [Eventual round synchronization] \label{prop:roundClockEventualSync}
For every time $t$ in a run, there exists a synchronization time after $t$.
\end{property}

\newcommand{\tRoundSync}[1]{\ensuremath{t_{#1}}\xspace}

\subsection{Performance Metrics} \label{sec:problemDef:performance}
For an oblivious adversary, we measure the maximum expected performance after GST under all possible adversary behaviors and protocol states, where the expectation is taken over random outputs of our randomness source $\mathcal{R}$, which drives the relay function.
In more detail, let $S$ be the set of all reachable states of a round synchronization algorithm, and let $\mathcal{A}$ be the set of all possible adversary behaviors after \GST.
This includes selecting up to $f$ processes to corrupt and scheduling all message deliveries within at most $\dissTime$ time.
For a state $s \in S$, and adversary behavior $a \in \mathcal{A}$, let $\textit{RS}(s,a,\pi)$ be the time from when $s$ occurs until the next synchronization time in a run extending $s$ with adversary behavior $a$ and the relay function derived from the random bits $\pi \in \mathcal{R}$.

The \emph{worst-case expected latency} of the round synchronization module is defined as 
\begin{equation*}
\max_{\substack{s \in S \\ a \in \mathcal{A}}} \left\{ \mathop{{}\mathbb{E}}_{\pi \in \mathcal{R}} \left[ \textit{RS} \left(s,a,\pi \right) \right] \right\}.
\end{equation*}

Similarly, to define message complexity let $M(s,a,r)$ be the total number of messages correct processes send from state $s$ until the next synchronization time in a run extending $s$ with adversary $a \in \mathcal{A}$ and relay output $\pi \in \mathcal{R}$.
The \emph{worst-case message complexity} is defined as 
\begin{equation*}
\max_{\substack{s \in S \\ a \in \mathcal{A}}} \left\{ \mathop{{}\mathbb{E}}_{\pi \in \mathcal{R}} \left[ M \left(s, a, \pi \right) \right] \right\}.
\end{equation*}

For brevity, in the rest of this paper, we omit the parameters $s$, $a$, $\pi$, and simply bound the expected latency or message cost over all reachable states and adversary behaviors.

\subsection{Using Round Synchronization to Solve Consensus} \label{sec:problemDef:solveConsensus}
In HotStuff~\cite{yin2019hotstuff}, Theorem 4 states the following in regards to reaching a decision in the consensus protocol:
\begin{quotation}
``After \GST, there exists a bounded time period $T_f$ such that if all correct replicas remain in view $v$ during $T_f$ and the leader for view $v$ is correct, then a decision is reached.''
\end{quotation}
The round synchronization module satisfies exactly the conditions of the theorem, \ie, an eventual round that all the correct processes are in at the same time for at least $\Delta = T_f$, and the leader of that round is correct.

Given a round synchronization module with expected linear message complexity and expected constant latency, HotStuff solves consensus in the same expected asymptotic message complexity and latency as the round synchronization module.
In addition, HotStuff also uses the same cryptographic primitives (namely threshold signatures) as we use in this paper, incurring similar computational costs.

Note that, in general, processes know neither whether their leader is correct nor whether all correct processes are in the same round as them.
Indeed, it is possible for a set of $f+1$ correct processes (and $f$ Byzantine ones) to make progress in a round with a Byzantine leader, while $f$ correct processes are stuck behind.
In an SMR algorithm where the processes communicate only with the leader of each round and do not broadcast decisions to all processes, this scenario is indistinguishable from one where the leader is correct and all correct processes make progress.
Therefore, to ensure the condition required by HotStuff (and captured by \Cref{prop:roundClockEventualSync}), we continuously advance in rounds and change leaders, regardless of the observed progress made in the consensus protocol utilizing the round synchronization module.

\section{Round Synchronization Decomposition}
\label{sec:decomposition}

We build the round synchronization module using a \emph{synchronizer} abstraction and two local modules.
The synchronizer captures the necessary distributed coordination among the processes.
The abstraction's properties appear in \Cref{sec:problemDef:synchronizer}, and a round synchronization module using this abstraction is given in \Cref{sec:problemDef:localFuncs}.
The latter consists of a \emph{timer} function that paces the synchronizer and a  \emph{leader} function that outputs the leader and round to the application.
This decomposition is illustrated in \Cref{fig:block_diagram}.

\vspace{-5pt}
\subsection{Synchronizer}
\label{sec:problemDef:synchronizer}

We define a \emph{synchronizer} abstraction to be a long-lived task with an API that includes an \emph{\wishToAdvanceSubscript{i}} input and
a \emph{\proposeRoundSubscript{r}{i}} output signal, where $r \in \mathbb{N}$.

In a similar way to the round synchronization module, we say that process \process{i} \emph{enters round~$r$} when \proposeRoundSubscript{r}{i} occurs.
We say \emph{process \process{} is in round~$r$} during the time interval that starts when \process{} enters round~$r$ and ends when it next enters another round.
If the process does not enter a new round, then it remains indefinitely in $r$.
We denote by \rMax{t} the maximum round a correct process is in at time~$t$.

We define four properties a synchronizer algorithm should guarantee.
The first ensures that rounds are monotonically increasing. Formally:
\vspace{-5pt}
\begin{property} [Monotonically increasing rounds] \label{prop:monotonicRounds}
	For each correct process \process{i}, if $\proposeRoundSubscript{r'}{i}$ occurs after  $\proposeRoundSubscript{r}{i}$, then  $r' > r$.
\end{property}
The next property is the validity of new rounds.
\vspace{-8pt}
\begin{property} [Validity] \label{prop:validity}
	If a correct process signals \proposeRound{r} then some correct process called \wishToAdvance while in round~${r-1}$.
\end{property}

Next, we define the two liveness properties.
Informally, the first ensures the stabilization of at least $f+1$ correct processes to the same maximum round, and the second ensures progress after the stabilization.

\begin{property} [Stabilization] \label{prop:synchronizerSyncronization}
	For any $t$ during the run, let $t_0$ be the first time when a correct process enters round $\rMax{t}$. 
	If no correct process enters any round $r > \rMax{t}$, then:
	\begin{description}
		\item[{S1}\label{prop:sync1}] From some time $t_1$ onward, at least $f+1$ correct processes are in round \rMax{t}.
		\item[{S2}\label{prop:sync2}] If $t_0 \geq \GST$ and \relay{\rMax{t},1} is correct, then from some time $t_2$ onward all the correct processes enter \rMax{t} and $t_2-t_0 \leq c_1$ for some constant $c_1$.  
	\end{description}
\end{property}
Although \Cref{prop:synchronizerSyncronization} is primed on no correct processes \emph{ever} entering rounds higher than \rMax{t}, we observe that \ref{prop:sync2} holds as long as no process enters rounds higher than \rMax{t} by $t_0+c_1$ because any such run is indistinguishable to all processes until time $t_0+c_1$ from a run where they never enter a higher round at all.
Formally:
\begin{observation} \label{observ:stabilization}
Assume \Cref{prop:synchronizerSyncronization} holds, then for any $t$ during the run, let $t_0 \geq \GST$ be the first time when a correct process enters round $\rMax{t}$. 
If no correct process enters any round $r > \rMax{t}$ by $t_0+c_1$ for some constant $c_1$ and \relay{\rMax{t},1} is correct, then all correct processes enter round \rMax{t} by $t_0+c_1$.
\end{observation}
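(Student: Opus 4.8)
The plan is to prove the observation by a standard indistinguishability (coupling) argument that lets us invoke \ref{prop:sync2} even though its hypothesis---that no correct process \emph{ever} enters a round higher than $\rMax{t}$---need not hold in the actual run. Write $R := \rMax{t}$, let $\sigma$ be the given run, let $t_0 \geq \GST$ be the first time a correct process enters $R$, assume $\relay{R,1}$ is correct, and suppose no correct process enters any round $r > R$ by time $t_0 + c_1$. First I would construct a companion run $\sigma'$ that (i) is identical to $\sigma$ through time $t_0 + c_1$---so that every correct process receives exactly the same messages and performs exactly the same local steps in both runs up to that instant---and (ii) in which no correct process \emph{ever} enters a round $r > R$.

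The second step is to apply \Cref{prop:synchronizerSyncronization} to $\sigma'$. Because $\sigma'$ agrees with $\sigma$ through $t_0 + c_1 \geq t_0$, in $\sigma'$ a correct process still enters $R$ at $t_0$, and since no correct process ever exceeds $R$ in $\sigma'$ we have $\rMax{t_0} = R$ there, with $t_0$ again the first time $R$ is entered; the hypotheses $t_0 \geq \GST$ and $\relay{R,1}$ correct are preserved because $\sigma'$ shares the same prefix, the same randomness $\mathcal{R}$, and the same (static) corruption set. Hence the premise of \Cref{prop:synchronizerSyncronization} holds for $\sigma'$, and \ref{prop:sync2} yields a time $t_2$ with $t_2 - t_0 \leq c_1$ by which all correct processes have entered $R$ in $\sigma'$; that is, all correct processes enter $R$ by $t_0 + c_1$ in $\sigma'$.

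The final step transfers this conclusion back to $\sigma$. The event ``correct process $\process{i}$ enters round $R$ by time $t_0+c_1$'' is determined entirely by the common prefix through $t_0+c_1$, which is identical in $\sigma$ and $\sigma'$; therefore the set of correct processes that have entered $R$ by $t_0+c_1$ is the same in both runs. Since this set contains all correct processes in $\sigma'$, it contains all correct processes in $\sigma$, which is exactly the claim.

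The hard part will be establishing step (i): that the prefix of $\sigma$ through $t_0 + c_1$ extends to an \emph{admissible} run in which no correct process ever enters a round above $R$. This is where the informal ``indistinguishability'' remark must be made precise---one has to argue that nothing in the shared prefix irrevocably commits a correct process to a later entry into a higher round, and that the adversary retains enough scheduling freedom after $t_0 + c_1$ (for instance by withholding or delaying the aggregated relay messages that trigger advancement) to hold every correct process at round $\leq R$ indefinitely. I would check this directly against the synchronizer algorithm of \Cref{sec:algorithm}, verifying that an adversary can always stall round advancement so that the ``frozen'' run $\sigma'$ genuinely exists; once that is secured, the remainder is a routine consequence of prefix-determinacy and \Cref{prop:synchronizerSyncronization}.
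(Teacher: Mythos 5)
Your proposal follows the same route as the paper: the paper's entire justification for \Cref{observ:stabilization} is the single sentence that precedes it, asserting that a run in which no correct process enters a round above \rMax{t} by $t_0+c_1$ is indistinguishable, up to time $t_0+c_1$, from a run in which no higher round is \emph{ever} entered, so that \ref{prop:sync2} applies to the latter and the conclusion transfers back by prefix-determinacy. Your three steps are a faithful (and more careful) expansion of that sentence, and your step (i) --- the existence of an \emph{admissible} companion run --- is exactly the point the paper glosses over.

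However, the mechanism you sketch for discharging step (i) would fail in this model, and that is a genuine gap. After \GST the adversary cannot withhold messages between correct processes at all, and can delay them by at most \dissTime. Consequently, if the shared prefix already contains, say, $f+1$ correct processes that called \wishToAdvance while in round $R = \rMax{t}$ and sent pre-commit messages for round $R+1$ toward a \emph{correct} relay \relay{R+1,1}, then every admissible extension of that prefix drives correct processes into round $R+1$ within bounded time; no scheduling choice can ``hold every correct process at round $\leq R$ indefinitely,'' so your frozen run $\sigma'$ simply does not exist for such prefixes. The freedom that actually makes the companion run exist is different: \wishToAdvance is an \emph{input} to the synchronizer abstraction (supplied by the local timer), so $\sigma'$ may suppress all \wishToAdvance signals after the prefix and let the Byzantine processes fall silent. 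Then no new pre-commit for any round above $R$ is ever produced by a correct process, no $(f+1)$-of-$n$ threshold certificate for such a round can be formed, and hence no correct process ever enters a round above $R$. Note that even this construction requires that the prefix contain fewer than $f+1$ correct pre-commits for rounds above $R$ --- a condition not implied by the observation's premise alone, but one that does hold in the only place the paper invokes the observation (the theorem of \Cref{sec:decomposition}, where no correct process calls \wishToAdvance at all during $[t^\ell, t^\ell + c_1 + \Delta]$, so by \Cref{prop:validity} and the structure of \Cref{alg} nothing in the prefix pertains to rounds above $R$). So your outline is the intended argument, but step (i) must be secured by controlling the \wishToAdvance input and Byzantine behavior, not by adversarial message scheduling, and it only goes through under the stronger hypotheses available at the observation's point of use.
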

The next property ensures progress.
\begin{property} [Progress] \label{prop:synchronizerProgress}
	For any $t$ during the run, if $f+1$ correct processes in round \rMax{t} call \wishToAdvance by $t_0$, and no correct process calls \wishToAdvance while in any round $r > \rMax{t}$ then:
	\begin{description}
		\item[{P1}\label{prop:prog1}] From some time $t_1$ onward, there is at least one correct process in $\rMax{t} + 1$.
		\item[{P2}\label{prop:prog2}] If $t_0 \geq \GST$ and \relay{\rMax{t},1} is correct, then from some time $t_2$ onward all the correct processes enter $\rMax{t}+1$ and  $t_2 - t_0 \leq c_2$ for some constant $c_2$.
	\end{description}
\end{property}
Property \ref{prop:prog2} is not required for round synchronization, but it gives a bound on performance.

\subsection{From Synchronizer to Round Synchronization}
\label{sec:problemDef:localFuncs}

\begin{figure}[t]
	\centering
	\includegraphics[width=0.9\textwidth]{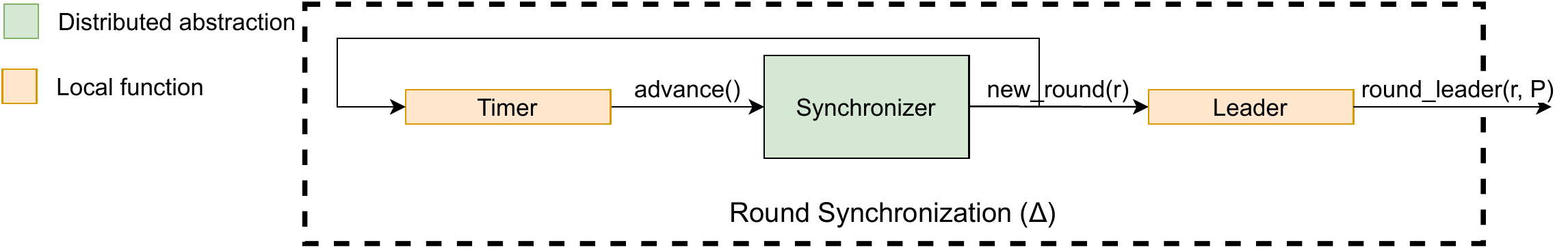}
	\caption{Round synchronization using the synchronizer abstraction.}
	\label{fig:block_diagram}
\end{figure}

We now describe how to use the synchronizer abstraction to implement round synchronization.
The implementation uses two local functions: a \emph{timer} function that paces a process' \wishToAdvance calls, and a \emph{leader} function that maps a round to a leader using the Relay function.
This construction is illustrated in \Cref{fig:block_diagram}, and specified in \Cref{alg:timerLeader}.
When one module invokes a function in another, we refer to this as a \emph{signal}, \eg, the timer signals \wishToAdvance to the synchronizer.

\begin{algorithm*}[t] \small
	\caption{Round synchronization using the synchronizer abstraction.}
	\label{alg:timerLeader}
	\SetAlgoNoEnd
	\DontPrintSemicolon
	
	\KwTimer()
	{
		\KwAfter($c_1 + \Delta$ from last \proposeRound{r} signal: \tcp*[h]{$c_1$ is defined in \ref{prop:sync2}} \label{alg:timerLeader:advance}) 
		{
			\wishToAdvance \;
		}
	}
	\BlankLine
	\KwLeader()
	{
		\KwOn(\proposeRound{r} signal\unskip:)
		{
			\newLeader{r, \relay{r,1}} \;
		}	
	}
\end{algorithm*}

We prove that this construction provides round synchronization.
Let $t^0 = \GST$ and $\forall \ell \geq 1$ let $t^\ell$ be the first time after $t^{\ell-1}$ that a correct process enters a new maximum round.
We prove the following lemma:

\begin{lemma} \label{lem:indefinitelyNewRounds}
	In an infinite run of \Cref{alg:timerLeader}, $t^\ell$ eventually occurs for any $\ell \geq 0$.
\end{lemma}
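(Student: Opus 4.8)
The plan is to prove the lemma by induction on $\ell$, driving the argument entirely from the synchronizer's stabilization and progress guarantees (\Cref{prop:synchronizerSyncronization} and \Cref{prop:synchronizerProgress}) together with the pacing behavior of the timer in \Cref{alg:timerLeader}. The base case $\ell = 0$ is immediate, since $t^0 = \GST$ occurs in any infinite run. For the inductive step I assume $t^{\ell-1}$ occurs and set $r^\ast := \rMax{t^{\ell-1}}$; the goal is to show that after $t^{\ell-1}$ some correct process enters a round strictly larger than $r^\ast$, which is exactly the event defining $t^\ell$.

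I would argue this by contradiction. Suppose no correct process ever enters a round larger than $r^\ast$ after $t^{\ell-1}$. Because $r^\ast$ is, by definition of $t^{\ell-1}$, a \emph{new} maximum round, no correct process entered a round above $r^\ast$ before $t^{\ell-1}$ either; hence under the contradiction hypothesis no correct process ever enters any round $r > r^\ast = \rMax{t^{\ell-1}}$. This is precisely the premise of \Cref{prop:synchronizerSyncronization} instantiated at $t = t^{\ell-1}$, so by \ref{prop:sync1} there is a time $t_1$ from which at least $f+1$ correct processes are continuously in round $r^\ast$.

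Next I would show that each of these $f+1$ processes calls \wishToAdvance while in round $r^\ast$. By \Cref{prop:monotonicRounds} their rounds only increase, and by the contradiction hypothesis they never rise above $r^\ast$; thus once such a process is in $r^\ast$, the last \proposeRound signal it received is the one that took it into $r^\ast$. The timer of \Cref{alg:timerLeader} therefore triggers a \wishToAdvance call $c_1 + \Delta$ after the process entered $r^\ast$, and this call does fire since the process remains in $r^\ast$. Consequently there is a time $t_0$ by which $f+1$ correct processes in round $r^\ast$ have called \wishToAdvance, and — since no correct process is ever in a round above $r^\ast$ — no correct process calls \wishToAdvance while in a round $r > r^\ast$. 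These are exactly the premises of \Cref{prop:synchronizerProgress} at $t = t^{\ell-1}$, so by \ref{prop:prog1} some correct process eventually enters round $r^\ast + 1 > r^\ast$. This contradicts the hypothesis, so $t^\ell$ must occur, closing the induction.

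I expect the main subtlety to be the middle step, namely matching \ref{prop:sync1} to the premise of \Cref{prop:synchronizerProgress}. I must ensure that the $f+1$ stabilized processes actually fire their timers \emph{in} round $r^\ast$ rather than having the timer reset by a later \proposeRound signal or left pending forever, and that no correct process sits in a round above $r^\ast$ so that the ``no \wishToAdvance above $\rMax{t}$'' clause of the progress premise is met. Both facts follow from monotonicity (\Cref{prop:monotonicRounds}) and the contradiction hypothesis, but since these are exactly the hypotheses around which the two synchronizer properties are phrased, the care lies in aligning them precisely rather than in any nontrivial computation.
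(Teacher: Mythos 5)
Your proof is correct and takes essentially the same route as the paper's: induction on $\ell$ with a contradiction argument in the inductive step, using \ref{prop:sync1} to obtain $f+1$ correct processes stabilized in the maximum round, the timer of \Cref{alg:timerLeader} to ensure they all call \wishToAdvance in that round, and then \ref{prop:prog1} to produce a process in a higher round, contradicting the hypothesis. Your extra care in verifying that the timers actually fire while the processes remain in round $r^\ast$ (via \Cref{prop:monotonicRounds} and the contradiction hypothesis) is a detail the paper states without justification, but it does not alter the argument.
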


\begin{proof}
We prove this by induction on $\ell$. 
Based on the model, the base step of the induction, $t^0 = \GST$ eventually occurs.

Next, assume that $t^\ell$ occurs during the run.
If $t^{\ell+1}$ occurs, then we are done.

Assume by contradiction that $t^{\ell+1}$ does not occur, \ie, by the induction hypothesis some correct process entered $\rMax{t^{\ell}}$ but no correct process enters any round $r > \rMax{t^{\ell}}$.
By \ref{prop:sync1}, eventually at least $f+1$ correct processes enter \rMax{t^\ell}.
Denote this set of processes by $P$.
The timer function ensures that eventually every process in $P$ calls \wishToAdvance, so there are at least $f+1$ correct processes in \rMax{t^\ell} that call \wishToAdvance.
By \ref{prop:prog1}, eventually at least one correct process enters $\rMax{t^{\ell+1}} = \rMax{t^\ell} + 1$, a contradiction to the assumption that no correct process enters any round $r > \rMax{t^{\ell}}$.
\end{proof}
We prove the main theorem of this section:
\begin{theorem}
Using a synchronizer abstraction, \Cref{alg:timerLeader} implements a round synchronization module.
\end{theorem}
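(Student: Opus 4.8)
The plan is to verify, against \Cref{alg:timerLeader}, the two defining properties of a round synchronization module: \Cref{prop:leaderAgreement} (Leader agreement) and \Cref{prop:roundClockEventualSync} (Eventual round synchronization). Leader agreement is immediate. The \emph{leader} function responds to each \proposeRound{r} signal by emitting \newLeader{r, \relay{r,1}}, and \relay{r,1} is a deterministic function of the shared randomness. Hence whenever two correct processes enter the same round $r$ they emit the same leader \relay{r,1}, so the two leaders coincide.

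The substance is \Cref{prop:roundClockEventualSync}: for every time $t$ I must exhibit a later synchronization time. First I would invoke \Cref{lem:indefinitelyNewRounds} to obtain the infinite sequence $t^0 = \GST, t^1, t^2, \ldots$ of times at which correct processes enter new maximum rounds $r_\ell = \rMax{t^\ell}$, which therefore grow without bound. Since \ref{prop:relayFunc:random} forces the first relay to cycle through all of $\Pi$ and more than $f$ processes are correct, the rounds with a correct first relay are cofinal; combined with $t^\ell \to \infty$ I can fix some $\ell$ with $t_0 := t^\ell > t$, $t_0 \geq \GST$, and \relay{r,1} correct, where $r := r_\ell$.

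For this round I would then pin down a clean $\Delta$-length synchronized window. The key timing claim is that no correct process enters a round larger than $r$ before $t_0 + c_1 + \Delta$: by \Cref{prop:validity}, entering any round above $r$ ultimately requires some correct process to call \wishToAdvance while in round $r$, but the \emph{timer} function fires only $c_1 + \Delta$ after a process enters its current round, and the earliest entry into $r$ is $t_0$ by definition. In particular no correct process enters a round above $r$ by $t_0 + c_1$, so \Cref{observ:stabilization} (the relaxed form of \ref{prop:sync2}) applies and all correct processes enter $r$ by $t_0 + c_1$. Combining the two facts, every correct process is in round $r$ throughout $[t_0 + c_1, \, t_0 + c_1 + \Delta]$, and its leader \relay{r,1} is correct, so $t_s = t_0 + c_1 > t$ is the required synchronization time.

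I expect the timing claim of the third paragraph to be the main obstacle, since it is where the two local functions and the synchronizer's guarantees must interlock. One has to argue that the maximum round cannot race ahead — that the Validity-based dependency chain (entering $r+1$ needs an advance in $r$, which needs a correct process resident in $r$ for $c_1 + \Delta$) genuinely blocks any correct process from overshooting $r$ before the window closes, so that the ``no correct process ever enters a higher round'' hypothesis of \ref{prop:sync2} can be discharged through \Cref{observ:stabilization} over just the finite interval $[t_0, t_0 + c_1]$. Care is also needed to confirm that the advance delay of $c_1 + \Delta$ is measured from entry into the \emph{current} round (so the timer resets on each \proposeRound{r} signal) and that $\Delta > 0$, which together guarantee the synchronized interval has full length $\Delta$ rather than collapsing to a point.
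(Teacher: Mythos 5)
Your proposal is correct and follows essentially the same route as the paper's proof: leader agreement falls out of the determinism of \relay{r,1}, and eventual synchronization is obtained by combining \Cref{lem:indefinitelyNewRounds} with the random rotation of first relays to find a round $r$ with a correct first relay entered at some $t_0 \geq \GST$ after $t$, then using the timer's $c_1+\Delta$ delay plus \Cref{prop:validity} to rule out entry into higher rounds before $t_0 + c_1 + \Delta$, and discharging \ref{prop:sync2} via \Cref{observ:stabilization} to conclude that $t_0 + c_1$ is a synchronization time. If anything, your version is slightly more careful than the paper's in explicitly choosing $t^\ell > t$ and in noting the validity chain argument, but the decomposition and key steps are identical.
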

\begin{proof}
Since the relay function's outputs are identical among all correct processes and the leader local function outputs \newLeader{r,\relay{r,1}}, it is immediate that the leader agreement property (\Cref{prop:leaderAgreement}) is satisfied.

We now prove eventual round synchronization (\Cref{prop:roundClockEventualSync}).
Define $\leader{\ell} \triangleq \relay{\rMax{t^\ell},1}$.
By \Cref{lem:indefinitelyNewRounds}, $t^i$ occurs for all $i \geq 0$, and since the first relay for each round is randomly chosen, eventually, with probability $1$, there exists a $\ell \geq 0$ such that \leader{\ell} is a correct process.
Let us look at $\rMax{t^\ell}$, and denote $\widetilde{t} \triangleq t^\ell+c_1+\Delta$.

Recall that $t^\ell$ is the time when the first correct process enters \rMax{t^\ell}.
By \Cref{alg:timerLeader:advance} in \Cref{alg:timerLeader}, no correct process calls \wishToAdvance between $t^\ell$ and $\widetilde{t}$, and because of validity (\Cref{prop:validity}) no correct process enters any round $r > \rMax{t^\ell}$ until at least $\widetilde{t}$.
By using \Cref{observ:stabilization}, we can apply \ref{prop:sync2} for \rMax{t^\ell}, since by $t^\ell + c_1$ all correct processes enter \rMax{t^\ell}.

Thus, between $t^\ell$ and  $t^\ell + c_1$, all correct processes enter \rMax{t^\ell}.
Since no correct process calls \wishToAdvance until at least $\widetilde{t}$, this guarantees that all correct processes remain in \rMax{t^\ell} until $\widetilde{t} = t^\ell+c_1+\Delta$, so $t^\ell + c_1$ is a synchronization time (\Cref{def:syncTime}), as needed.
\end{proof}
\section{An Expected Linear Message Complexity and Constant Latency Synchronizer}
\label{sec:algorithm}

\begin{figure}[t]
	\centering
	\includegraphics[width=\textwidth]{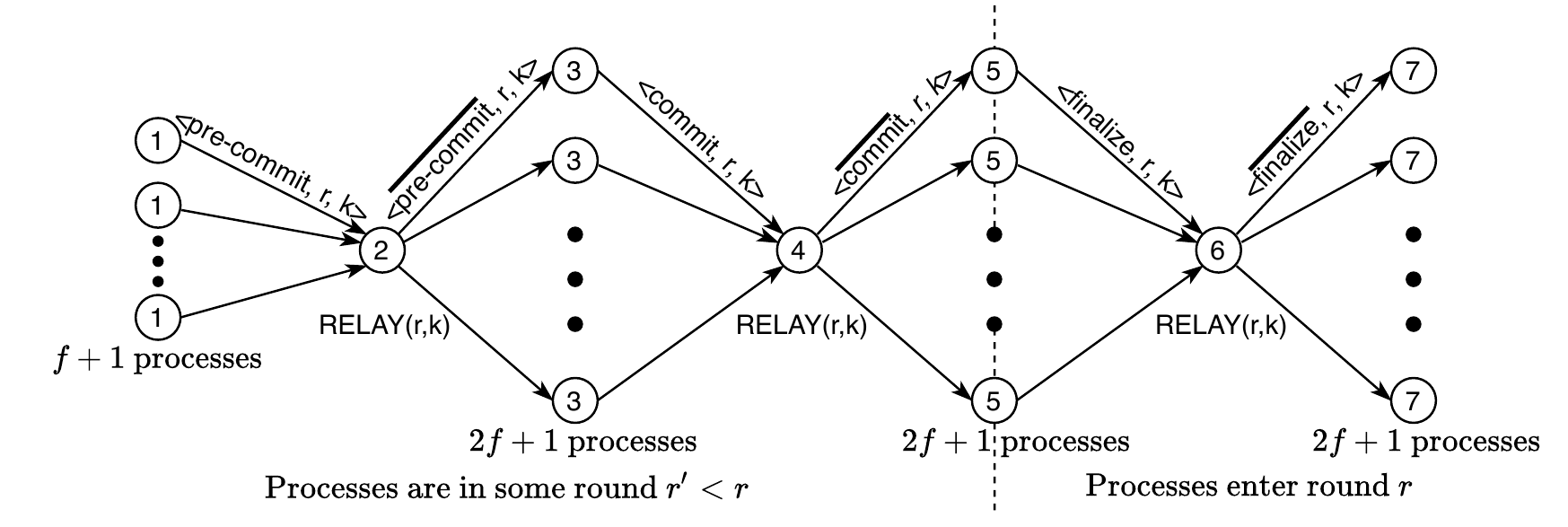}
	\caption{\textbf{The message flow of the algorithm.} {A process enters round~$r$ when it receives a \textoverline{commit} message for round $r$, the circled numbers represent the different stages of the algorithm (see~\Cref{alg}). Only the $2f+1$ correct processes are illustrated.}}
	\label{fig:message_flow}
\end{figure}

In this section we present a synchronizer abstraction algorithm with expected linear message complexity and constant latency in the Byzantine case.

We start by describing the main ideas used to lower the message complexity (while still guaranteeing constant latency) in \Cref{sec:algs:achievingLinearComplexity}.
We give a more in-depth description of the algorithm in \Cref{sec:algs:algDescription}, reason about the algorithm's correctness in \Cref{sec:algs:proofs}, and performance in \Cref{sec:algs:performance}.

\subsection{Achieving Linear Message Complexity} \label{sec:algs:achievingLinearComplexity}
The crux of the algorithm is a relay-based distribution of messages among processes.
A standard Byzantine broadcast system, which ensures that a message sent by a correct process is eventually delivered by all other correct processes, usually requires quadratic message complexity for each message disseminated.
This is because in Byzantine broadcast protocols such as Bracha's~\cite{bracha1987asynchronous}, when a correct process delivers a message, it also sends it to all the other processes, resulting in all-to-all communication for each delivered message.

In our algorithm, we instead use a single designated process as a relay. 
Processes send their messages to the relay, which aggregates messages from a number of processes, combines them into one message using a threshold signature, and broadcasts it to all the processes.
This mechanism reduces the total number of protocol messages from $O(n^2)$ to $O(n)$.

A difficulty arises if the relay is Byzantine.
We overcome this as follows: when a process \process{} sends a message to a relay, it expects a response from it within a certain time-bound.
If no timely response arrives, \process{} can deduce that either \GST has not occurred yet and the message to/from the relay is delayed, or it is after \GST and the relay is Byzantine.
In either case, after the allotted time passes, \process{} proceeds to send a message to a different relay, again waiting for the new relay to respond in a timely manner, and so on.
This mechanism uses the relay function described in \Cref{sec:model}. 
Once a correct relay is contacted, the algorithm makes progress.
In expectation, the number of consecutive Byzantine relays until a correct one is bounded by $3/2$, leading to expected constant latency and linear message complexity.
In the worst-case, each round has $f+1$ potential relays, guaranteeing that at least one of them is correct, which ensures liveness.

\subsection{Algorithm Description} \label{sec:algs:algDescription}
At a high level, the goal of the algorithm is to eventually enter all rounds during the run, and reach a synchronization time after \GST in every round $r$ where \relay{r,1} is a correct process.
If the relay is Byzantine, then the goal is to eventually move from $r$ to $r+1$.
The randomization of the relay function guarantees that in an infinite run there will be infinitely many rounds with a correct process as the first relay, guaranteeing an infinite number of synchronization times.

\textbf{Message flow of the algorithm.}
The algorithm is presented in~\Cref{alg}, and its message flow is depicted in \Cref{fig:message_flow}.
Protocol messages are signed and verified; for brevity, we omit the signatures and their verification from the algorithm description and pseudocode.

A process sends to \relay{r,k} messages of the form $\left\langle \text{message type}, r, k \right\rangle$, where $r$ and $k$ are natural numbers, and message type is one of the following: pre-commit, commit, or finalize.
The relay's messages to the processes are threshold signatures on an aggregation of the same messages, denoted $\TC{r,k}$, $\QC{r,k}$, and $\finalizeAck{r,k}$, respectively.
Each threshold signature is created using some number ($f+1$ or $2f+1$) of signatures.

When \wishToAdvance is signaled via the local timer function (see \Cref{sec:problemDef:localFuncs}) to indicate that it wants to move from round $r-1$ to round $r$, the process sends a pre-commit message to the relay (this is stage 1 of the algorithm).
Once $f+1$ processes indicate that they wish to move to round $r$, the relay broadcasts a \textoverline{pre-commit} message (stage 2).
The reason $f+1$ processes are needed to initiate the first stage of the algorithm is to ensure that there is at least one correct process among them, preventing Byzantine processes from causing correct ones to advance prematurely.
Any process receiving a relay's \textoverline{pre-commit} message in a round $r' < r$ joins in by sending a commit message for $r$ (stage 3).
Unlike in previous work such as Cogsworth~\cite{naor2019cogsworth}, \textoverline{pre-commit} messages are linked to a particular relay, and therefore, if the protocol times out and proceeds to the next relay, the new relay needs to collect $f+1$ pre-commits afresh.
This subtle difference prevents Byzantine relays from spuriously engaging in the protocol, which is crucial for avoiding the quadratic message complexity occurring in Cogsworth.

When $2f+1$ processes indicate that they commit to moving to $r$, the relay sends a \textoverline{commit} message (stage 4) and processes that receive it enter that round (stage 5).
Requiring $2f+1$ processes to commit to a round $r$ before entering it ensures that at least $f+1$ correct processes are aware of the intent to enter $r$.
This ensures that at least $f+1$ correct processes will eventually enter $r$, and those $f+1$ processes guarantee progress, as it is the minimal quorum required to initiate the stages of the algorithm to the next round, until a round with a first correct relay is reached and in that round a synchronization time will occur.

However, the algorithm for synchronizing for round $r$ does not end when a process receives a \textoverline{commit} message for~$r$.
Rather, a process that enters round $r$ sends a finalize message to help any lagging processes with the transition to round $r$.
Once $2f+1$ finalize messages are sent, the relay broadcasts a \textoverline{finalize} message (stage 6), and when a process receives it, it completes the algorithm for round $r$ (stage 7).
The finalization phase is needed to overcome cases of a Byzantine relay that does not send the \textoverline{commit} message to all the processes.

\textbf{Variables and timeouts.}
The variable \curr stores the current round a process is currently in which changes in stage 5, and \nextVar indicates to what round the process is attempting to enter.
The value of \nextVar becomes $\curr + 1$ when a process invokes \wishToAdvance, and it can become higher if the process learns (via a \textoverline{pre-commit}) of at least $f+1$ other processes that want to advance to a higher round than the one the process is currently in.

The timeouts at the bottom of the pseudocode dictate when a process moves to the next relay of a round.
When a process sends a message to a relay, it expects the relay to respond within $2\dissTime$, which is the upper bound of the round-trip time after \GST.
For example, if a process sends a message of round $r$ to \relay{r,k} at time $t$ and does not receive a response by $t+2\dissTime$, it sends the message to \relay{r, k+1}.
This continues up to \relay{r, f+1}, guaranteeing that at least one of the relays for round $r$ is correct.

Upon a timeout, a process sends a pre-commit message to the next relay in line, and once that relay gets f+1 such messages, it, too, can try to complete the stages of the protocol for the same round.
There is a tradeoff involved in choosing the timeout – a shorter timeout may cause a second relay to engage even when the first relay is correct, whereas a longer one delays progress in case of a Byzantine relay. Nevertheless, it is important to note that a process responds to all relays, so contacting the $(k+1)$-st relay for round $r$ does not in any way prevent the $k$-th one from making progress. Thus, while setting an aggressive timeout may cause the protocol to send more messages, it does not in any way hamper progress.
A process that partakes in the protocol to advance to round $r$ contacts a new relay every $2 \dissTime$ time for as long as it does not make progress in the phases of the algorithm for round $r$.
Since a process takes an expected $6\delta$ to complete the algorithm for round $r$, the process contacts $3$ relays in expectation.

The \tempRelay array holds the highest relay for each round the process sent a pre-commit message to.
For example, $\tempRelay[r] = k$ for $k > 1$ indicates that the process sent $\wish{r,1}, \ldots, \wish{r,k}$ messages to $\relay{r,1}, \ldots, \relay{r,k}$, respectively.
Note that a process sends a pre-commit message for round $r$ to \relay{r,1} when it first receives a \textoverline{pre-commit} message in stage 3, regardless of the relay it received the message from.
This is to allow the first relay of round $r$ to complete the stages of the algorithm in case it is correct, and make sure that round synchronization will occur in round~$r$.
Note that the fact that some relay sends a message with a threshold signature does not ensure that that relay is correct, even if all the signatures used to create the threshold signature are from correct processes.
For example, a Byzantine relay can broadcast a message only to a subset of the correct processes.
Thus, to ensure liveness, processes must iterate through all $f+1$ relays of a round, starting from the first one, until progress is made.

We note that the \tempRelay array is introduced in the pseudocode for simplicity, but in a real implementation there is no need for an unbounded array to be stored in memory.
A process only sends messages to the relays of rounds stored in the  \curr and \nextVar variables, thus limiting the amount of memory needed for an actual implementation to a constant number of integers.

\textbf{Example.}
To clarify the need for the last phase of the algorithm (stages 6 and 7), consider the following scenario:
Suppose a set $P$ of $f+1$ correct processes are in round $r-1$ and invoke \wishToAdvance.
The remaining $f$ correct processes are in a round $r' < r-1$.
The processes in $P$ send a pre-commit message to \relay{r,1}, which is Byzantine.
The relay generates a threshold signature and sends a \textoverline{pre-commit} only to the processes in $P$, which respond with a commit message.
Now, \relay{r,1}, with the help of $f$ Byzantine processes, creates a \textoverline{commit} message for $r$, but sends it to only one correct process $\process{i}$ in $P$.
This results in a scenario where \process{i} is the only correct process in round $r$, while $f$ correct processes remain in round $r-1$ and continue to timeout and send pre-commit messages to the relays of round $r$.
Since a relay needs at least $f+1$ pre-commit messages to engage the stages of the algorithm, unless \process{i} continues to help the rest of the processes in $P$ by sending pre-commit messages, they might get stuck in round $r-1$.
Therefore, processes continue to timeout and send pre-commit messages in the previous round until they receive a \textoverline{finalize} message. 
Once a process in $r$ receives a \textoverline{finalize} message for $r$, it knows that there are at least $f+1$ correct processes in round $r$, and can stop sending pre-commit messages for $r$.
This is crucial for achieving the desired message complexity after GST.
These $f+1$ correct processes will eventually call \wishToAdvance and proceed to round $r+1$.

\makeatletter
\let\oldnl\nl
\newcommand{\nonl}{\renewcommand{\nl}{\let\nl\oldnl}}
\makeatother
\begingroup
\singlespacing
\begin{algorithm*}[H] \smaller[1]
	\caption{\textbf{Synchronizer Algorithm}. The circles show the protocol's stages.}
	\label{alg}
	\SetAlgoNoEnd
	\DontPrintSemicolon
	\SetInd{0.4em}{0.4em}
	\KwInitialize(\unskip:\label{alg:relibra:initialize} )  
	{
		$\curr \gets 0$ \tcp*[h]{Processes begin their execution at round 0.} \;
		$\nextVar \gets 0$ \;
		$\forall i \in \mathbb{N} \colon \tempRelay[i] \gets 1$ \;
		$\finalized \gets \true$ \;
	}
	\setlength{\columnsep}{20pt}
	\BlankLine 	\BlankLine
	\begin{multicols*}{2}
	\nonl \underline{\textbf{Every process}}: \;
	\nonl \circled{1} \;
	\nl \KwOn({\wishToAdvance signal:} \label{alg:relibra:wishToAdvance}) {
		\If(\tcp*[h]{old \newline \phantom{z} \qquad \qquad \qquad \qquad \qquad \qquad \quad \qquad  round}){$\curr < \nextVar$}
		{
			\Return \;
		}
		$\nextVar \gets \curr + 1$ \label{alg:line:advanceChangeCurr}\; 
		\send \newRound{\nextVar, 1} to \relay{\nextVar, 1} \label{alg:line:sendPC1}\;
	}

	\columnbreak
	\nonl \underline{\textbf{Relay ($\relay{r, k}$):}}\;
	\nonl \circled{2} \;
	\nl \KwUpon({receiving the first valid $f + 1$ $\newRound{r, k}$ messages:} \label{alg:relibra:leaderReceiveTC}) {
		\broadcast $\TC{r, k}$
	}
	
	\end{multicols*}
	\BlankLine 
	\BlankLine
	\begin{multicols*}{2}
	\nonl \circled{3} \;
	\nl \KwUpon({receiving the first valid $\TC{r, k}$ from $\relay{r,k}$:  \label{alg:relibra:TC}}) 
	{
		\If(\tcp*[h]{old round}){$r < \nextVar$} 
		{
			\Return \;
		}
		\If (\tcc*[h]{start participating \newline \phantom{z} \qquad \qquad \qquad \qquad  \qquad \quad in round $r$}){$r > \nextVar$ } 
		{
			$\nextVar \gets r$ \label{alg:line:updateCurrCommit}\;
		\send \newRound{r,1} to \relay{r,1} \label{alg:relibra:forwardTCto1} \label{alg:line:sendPC2} \label{alg:line:sendPCtoRelay1} \;	
		}
		\send \vote{r,k} to \relay{r,k} \;
	}

	\columnbreak
	\nonl \circled{4} \;
	\nl \KwUpon({receiving the first valid $2f + 1$ $\vote{r, k}$ messages:} \label{alg:relibra:leaderReceiveQC}) {
	\broadcast $\QC{r, k}$ 
	}
	\end{multicols*}
	
	\BlankLine 	
	\BlankLine
	\begin{multicols*}{2}
	\nonl \circled{5} \;
	\nl \KwUpon({receiving the first valid \QC{r, k} from $\relay{r,k}$:  } \label{alg:relibra:QC})
	{
		\If(\tcp*[h]{old round}){$r < \curr$}
		{
			\Return \;
		}
		\If(\tcp*[h]{enter round $r$}){$r > \curr$ \label{alg:relibra:QC:ifLarger}}
		{
			$\curr \gets r$ \;	
			$\finalized \gets \false$ \;	
			\send \vote{r,1} to \relay{r,1} \label{alg:line:sendCommitToRelay1}\;
			$\proposeRound{r}$ \tcp*[h]{signal new round} \label{alg:relibra:advanceToRound} \;
		}
		\send \finalize{r,k} to \relay{r,k} \;
	}
		
	\columnbreak
	\nonl \circled{6} \;
	\nl \KwUpon({receiving the first valid $2f + 1$ $\finalize{r, k}$ messages:} \label{alg:relibra:leaderReceiveFinzalize}) {
	\broadcast $\finalizeAck{r, k}$ \label{alg:relibra:leaderMulticastQC}
	}
	\end{multicols*}
	\BlankLine
	\BlankLine
	\nonl \circled{7} \;	
	\nl \KwUpon({receiving the first valid \finalizeAck{r, k} from \newline$\relay{r,k}$: \label{alg:relibra:receiveFinalize}})
	{
		\If{$r = \curr$}
		{
			$\finalized \gets \true$ \;
		}
	}	
	\BlankLine
	\BlankLine
	\nonl \underline{\textbf{Timeouts (for every process):}}\;
	\KwOn({pre-commit and commit timeouts: \commentx{Every $2\dissTime$ from last sending pre-commit or commit messages and not receiving the matching \textoverline{pre-commit} or \textoverline{commit}}} \label{alg:relibra:attemptedTC})
	{
		\If{$\tempRelay[\nextVar] < f+1$  \label{alg:relibra:timeoutIf}}
		{
			$\tempRelay[\nextVar] \gets \tempRelay[\nextVar] + 1 $ \;
			\send $\newRound{\nextVar, \tempRelay[\nextVar]}$ to $\relay{\nextVar, \tempRelay[\nextVar]}$ \label{alg:line:sendPC3}\;
		}
	}
	
	\KwOn({finalize timeout: \tcp*[h]{Every $2 \dissTime$ from last sending finalize and not receiving the matching \textoverline{finalize}} \label{alg:relibra:timeoutFinalize}})
	{
		\If{ $\finalized = \false$ \KwAnd $\tempRelay[\curr] < f+1$}
		{
			$\tempRelay[\curr] \gets \tempRelay[\curr] + 1 $ \;
				\send $\newRound{\curr, \tempRelay[\curr]}$ to $\relay{\curr, \tempRelay[\curr]}$ \label{alg:line:sendPC4}\;
		}
	}	
\end{algorithm*}
\endgroup

\subsection{Correctness} \label{sec:algs:proofs}
Next, we prove that the algorithm satisfies the properties of a synchronizer, as defined in~\Cref{sec:problemDef:synchronizer}. 

\begin{restatable}{lemma}{monotonicalRounds} \label{lem:monotonicalRounds}
\Cref{alg} satisfies monotonically increasing rounds (\Cref{prop:monotonicRounds}).
\end{restatable}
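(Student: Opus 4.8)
The plan is to trace the single point in \Cref{alg} where the \proposeRound output is emitted and to tie it to the monotone behavior of the local variable \curr. The only line that signals \proposeRound{r} is \cref{alg:relibra:advanceToRound}, which sits inside the \QC{r,k} handler (\cref{alg:relibra:QC}) and, crucially, inside the guard $r > \curr$ of \cref{alg:relibra:QC:ifLarger}. Hence a correct process \process{i} emits \proposeRoundSubscript{r}{i} only when, at that instant, its local \curr satisfies $\curr < r$, and the very same branch immediately sets $\curr \gets r$.

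First I would establish the invariant that \curr is monotonically non-decreasing throughout the execution of each correct process. This amounts to inspecting every assignment to \curr in \Cref{alg}: it is initialized to $0$ in \cref{alg:relibra:initialize}, and the only reassignment is $\curr \gets r$ on \cref{alg:relibra:QC:ifLarger}, which fires exactly when $r > \curr$. Thus each write strictly increases \curr, and no line ever decreases it, giving the claimed monotonicity.

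Next I would combine these two observations. Suppose \proposeRoundSubscript{r'}{i} occurs strictly after \proposeRoundSubscript{r}{i} for a correct process \process{i}. At the moment \proposeRoundSubscript{r}{i} is signaled, \cref{alg:relibra:QC:ifLarger} sets $\curr = r$. Between this signal and the later signal \proposeRoundSubscript{r'}{i}, the value of \curr can only grow by the monotonicity invariant, so it is at least $r$ at the time of the second signal. But \proposeRoundSubscript{r'}{i} fires only through the guard $r' > \curr$, whence $r' > \curr \geq r$, i.e. $r' > r$, which is exactly \Cref{prop:monotonicRounds}.

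The argument is short, and its only subtlety lies in the first step: one must be sure that no other handler, timeout, or branch mutates \curr, so the main (modest) obstacle is a careful exhaustive scan of \Cref{alg} to confirm that \cref{alg:relibra:QC:ifLarger} is the sole writer of \curr beyond initialization.
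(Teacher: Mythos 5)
Your proof is correct and follows essentially the same route as the paper's: the paper's (much terser) argument likewise rests on the fact that \proposeRound{r} is signaled in stage~5 only under the guard $r > \curr$, with \curr updated to $r$ at that moment, so rounds are entered in strictly increasing order. Your explicit invariant that \cref{alg:relibra:QC:ifLarger} is the sole writer of \curr beyond initialization is just a more careful spelling-out of the same observation.
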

\begin{proof}
	The algorithm signals $\proposeRound{r}$ in stage 5 only when it receives a \textoverline{commit} message for a round~$r$ that is larger than the one the process is currently in.
	Therefore, a process enters rounds in monotonically increasing order.
\end{proof}

\begin{restatable}{lemma}{lemmaValidity} \label{proposition:mustCallAdvance}
\Cref{alg} satisfies round validity (\Cref{prop:validity}).
\end{restatable}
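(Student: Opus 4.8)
The plan is to trace backward through the stages of the algorithm from the $\proposeRound{r}$ signal to the originating $\wishToAdvance$ call. The validity property (\Cref{prop:validity}) requires that if a correct process signals $\proposeRound{r}$, then some correct process called $\wishToAdvance$ while in round $r-1$. First I would observe that, by the code in stage 5 (\Cref{alg:relibra:advanceToRound}), a correct process signals $\proposeRound{r}$ only upon receiving a valid $\QC{r,k}$ message from $\relay{r,k}$ for some $k$. The threshold signature $\QC{r,k}$ is unforgeable under our cryptographic assumptions, so its existence implies that at least $2f+1$ processes sent $\vote{r,k}$ (commit) messages; since at most $f$ are Byzantine, at least $f+1$ of these are correct, and in particular at least one correct process sent a $\vote{r,k}$ message.

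Next I would argue that a correct process sends a commit message $\vote{r,\cdot}$ (in stage 3, \Cref{alg:relibra:TC}, or via \Cref{alg:line:sendCommitToRelay1}) only after having received a valid $\TC{r,k'}$ (\textoverline{pre-commit}) message for round $r$. By the same threshold-signature argument, a valid $\TC{r,k'}$ requires $f+1$ processes to have sent $\newRound{r,k'}$ (pre-commit) messages, of which at least one is correct. The key step is then to examine why a correct process sends a pre-commit message for round $r$: inspecting the code, a correct process sends $\newRound{r,\cdot}$ either from stage 1 (\Cref{alg:line:sendPC1}, upon $\wishToAdvance$ with $\nextVar \gets \curr+1 = r$), from stage 3 (\Cref{alg:line:sendPCtoRelay1}, upon forwarding after a \textoverline{pre-commit}), or from the timeout handlers (\Cref{alg:line:sendPC3,alg:line:sendPC4}).

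The main obstacle is that the pre-commit messages of stage 3 and the timeouts do not directly correspond to a fresh $\wishToAdvance$ call, so a naive backward trace could loop among pre-commit sources without ever terminating at a $\wishToAdvance$. To handle this cleanly, I would take the \emph{first} correct process (ordered by time) ever to send a pre-commit message for round $r$. Since stage 3 and the timeout handlers only fire after this process has already received a $\TC{r,\cdot}$ or is already participating in round $r$ (which itself requires a prior pre-commit for $r$ by some correct process, contradicting minimality), the first correct pre-commit for $r$ must originate from stage 1. There, the guard ensures $\curr = r-1$ and $\nextVar$ is being set to $\curr+1 = r$, meaning this correct process executed the $\wishToAdvance$ handler while in round $r-1$. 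This establishes that some correct process called $\wishToAdvance$ in round $r-1$, completing the proof. I expect the delicate point to be rigorously justifying the minimality argument that rules out stage 3 and the timeout handlers as the \emph{first} correct source, which rests on the observation that both of those require a pre-existing $\TC{r,\cdot}$ or participation in round $r$, each traceable to an earlier correct pre-commit.
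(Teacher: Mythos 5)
Your proposal is correct and follows essentially the same route as the paper's own proof: a backward trace through the threshold signatures (\textoverline{commit} to commit to \textoverline{pre-commit} to pre-commit) followed by a minimality argument on the \emph{first} correct process to send a pre-commit for round $r$, which rules out stage 3 and the timeout handlers as the origin and forces it to be the \wishToAdvance handler, where $\nextVar \gets \curr + 1 = r$ gives $\curr = r-1$. You even make explicit two points the paper leaves implicit, namely the case analysis for the timeout handlers and the final observation that the stage-1 assignment implies the process was in round $r-1$.
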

\begin{proof}
	A correct process enters round $r$ when it is in a round $r' < r$ and receives a \textoverline{commit} message for $r$.
	A \textoverline{commit} message is a threshold signature of $(2f+1)$-of-$n$ commit messages, meaning at least $f+1$ are from correct processes.
	A correct process sends a commit message for round $r$ when it receives a \textoverline{pre-commit} message for $r$.
	A \textoverline{pre-commit} message is a threshold signature of $(f+1)$-of-$n$ pre-commit messages, meaning at least one correct process sent a pre-commit message for round $r$.
	
	Denote \process{i} as the first correct process that sends a pre-commit message for $r$ during the run.
	A correct process only sends a pre-commit for $r$ (in Lines \ref{alg:line:sendPC1}, \ref{alg:line:sendPC2}, \ref{alg:line:sendPC3}, and~\ref{alg:line:sendPC4}) when its \nextVar or \curr variables hold $r$.
	\nextVar changes in one of two places – \Cref{alg:line:advanceChangeCurr} when a process calls \wishToAdvance, and \Cref{alg:line:updateCurrCommit} on receiving a valid  \textoverline{pre-commit} for $r$.
	\curr changes on receiving a valid \textoverline{commit} for $r$.
	Because no \textoverline{pre-commit} or \textoverline{commit} message can be sent for round $r$  before at least one correct process sends a pre-commit for $r$, then \process{i} must have sent its pre-commit message for round $r$ when it changed its \nextVar in \Cref{alg:line:advanceChangeCurr}, \ie, on executing \wishToAdvance.
\end{proof}
\begin{proposition} \label{proposition:finalizeMessage}
	If a correct process receives a \textoverline{finalize} for round $r$ at time $t$, then at least $f+1$ correct processes entered round $r$ by $t$.
\end{proposition}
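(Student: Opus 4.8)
The plan is to trace a $\overline{\textsf{finalize}}$ message back to the correct processes that must have produced the finalize messages aggregated inside it, and then count them. First I would unpack the structure of the certificate: a \finalizeAck{r,k} is a threshold signature over $2f+1$ distinct \finalize{r,k} messages (for some relay index~$k$), broadcast by \relay{r,k} in stage~6 (\Cref{alg:relibra:leaderReceiveFinzalize}). By the unforgeability of the threshold signing scheme (our polynomial-adversary assumption), the mere existence of a valid \finalizeAck{r,k} certifies that $2f+1$ genuine finalize messages for round~$r$ were actually sent. Since at most $f$ processes are Byzantine, at least $f+1$ of these $2f+1$ signers are correct.

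The crux is to show that a correct process sends a finalize message for round~$r$ only \emph{after} it has entered round~$r$. I would establish this by inspecting the pseudocode: the only place a correct process emits \finalize{r,k} is in stage~5 (\Cref{alg:relibra:QC}), upon receiving a valid \textoverline{commit} for~$r$, and only when the guard $r \geq \curr$ holds (it returns otherwise). If $r > \curr$, the process enters round~$r$ within the same handler, immediately before sending the finalize, by setting $\curr \gets r$ and signalling \proposeRound{r} (\Cref{alg:relibra:QC:ifLarger,alg:relibra:advanceToRound}). If $r = \curr$, then the process had already entered round~$r$ earlier; this uses \Cref{lem:monotonicalRounds}, since the only assignment to \curr is in stage~5 simultaneously with the \proposeRound{} signal, so \curr reaches the value $r$ exactly at the (unique, earlier) moment the process entered~$r$. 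Either way, the correct signer entered round~$r$ no later than when it sent its finalize message.

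Finally I would close the timing gap. The relay can assemble and broadcast \finalizeAck{r,k} only after it has received all $2f+1$ constituent finalize messages, so each such message was sent before the broadcast, and hence before the correct process received the \textoverline{finalize} at time~$t$ (\Cref{alg:relibra:receiveFinalize}). Combining the three observations: at least $f+1$ of the signers are correct, each entered round~$r$ before emitting its finalize, and each did so by~$t$. Thus at least $f+1$ correct processes entered round~$r$ by~$t$, which is exactly the claim.

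I expect the main obstacle to be the stage-5 case analysis, specifically the $r = \curr$ branch: ruling out any execution path in which a correct process finalizes a round it has not itself entered relies on the invariant that \curr is non-decreasing and is set to $r$ \emph{only} upon entering~$r$, which I would justify via \Cref{lem:monotonicalRounds} together with the fact that the sole assignment to \curr coincides with the \proposeRound{} signal. The threshold-signature counting and the message-ordering (timing) arguments are routine by comparison.
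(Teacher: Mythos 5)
Your proof is correct and takes essentially the same approach as the paper's: unpack the \textoverline{finalize} threshold signature into $2f+1$ finalize messages, conclude at least $f+1$ of the signers are correct, and observe that a correct process only sends a finalize for round $r$ after having entered round $r$ (since the finalize is sent in stage~5 upon receiving \textoverline{commit} for $r$). The paper compresses your stage-5 case analysis ($r > \curr$ versus $r = \curr$) and the timing argument into a single sentence, but the underlying argument is identical.
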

\begin{proof}
Let $t$ be a time in which a correct process received a \textoverline{finalize} message for round $r$.
This message is a threshold signature of $(2f+1)$-of-$n$ finalize messages, of which at least $f+1$ originated from correct processes.
A correct process only sends a finalize message for $r$ if it receives a \textoverline{commit} message for $r$, which means that it is already in round $r$ by time $t$.
\end{proof}
\begin{lemma} \label{claim:lumierePropSynchronization}
\Cref{alg} satisfies stabilization (\Cref{prop:synchronizerSyncronization})  with $c_1 = 4\dissTime$.
\end{lemma}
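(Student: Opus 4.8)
The plan is to fix a time $t$, abbreviate $R\triangleq\rMax{t}$, let $t_0$ be the first instant a correct process enters $R$, and assume throughout the hypothesis of \Cref{prop:synchronizerSyncronization} that no correct process ever enters a round larger than $R$. First I would record two standing facts. By \ref{prop:relayFunc:differentProcesses} the $f+1$ relays of round $R$ are distinct, so at most $f$ are Byzantine and some index $j^\ast$ gives a correct relay $\relay{R,j^\ast}$. And because the process entering $R$ at $t_0$ does so on a $\QC{R,\cdot}$, which aggregates $2f+1$ commits and hence at least $f+1$ from correct processes, there is a set $C$ of at least $f+1$ correct processes that each sent a commit for $R$; as in the proof of \Cref{proposition:mustCallAdvance} each of them first received a $\TC{R,\cdot}$, thereby setting $\nextVar=R$ and, at that point, sending $\newRound{R,1}$ to $\relay{R,1}$.

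For \ref{prop:sync1} I would argue by contradiction: suppose fewer than $f+1$ correct processes are ever in $R$. Then no $\finalizeAck{R,\cdot}$ is ever formed, since by \Cref{proposition:finalizeMessage} it would witness $f+1$ correct processes in $R$; consequently every process that did enter $R$ keeps $\finalized=\false$ and, through the finalize timeout, keeps emitting $\newRound{R,\cdot}$ to successive relays, while every member of $C$ still outside $R$ keeps $\nextVar=R$ and, through the pre-commit/commit timeout, does the same. Thus all $|C|\ge f+1$ correct processes perpetually forward pre-commits for $R$, and cycling through the $f+1$ indices each eventually delivers $\newRound{R,j^\ast}$ to the correct relay; after \GST that relay collects $f+1$ of them and broadcasts $\TC{R,j^\ast}$, every correct process with $\nextVar\le R$ answers with $\vote{R,j^\ast}$, and — provided the commit count reaches $2f+1$, which is the obstacle discussed below — $\relay{R,j^\ast}$ broadcasts $\QC{R,j^\ast}$ and all the lagging processes enter $R$, contradicting the assumption.

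For \ref{prop:sync2} I would strengthen to $t_0\ge\GST$ and $\relay{R,1}$ correct and follow a four-delay timeline measured from $t_0$. The members of $C$ sent $\newRound{R,1}$ to the correct first relay before $t_0$, so by $t_0+\dissTime$ it holds $f+1$ pre-commits and broadcasts $\TC{R,1}$, reaching all correct processes by $t_0+2\dissTime$; each answers with $\vote{R,1}$ (and every process that itself enters $R$ meanwhile also sends $\vote{R,1}$ to $\relay{R,1}$ in stage~5), so by $t_0+3\dissTime$ the relay holds $2f+1$ commits and broadcasts $\QC{R,1}$, whence all correct processes enter $R$ by $t_0+4\dissTime$. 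This yields $c_1=4\dissTime$.

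The hard part, which I would isolate as a single claim feeding both parts, is that within this window every correct process indeed answers the broadcast $\TC{R,\cdot}$ with a commit — equivalently, that no correct process has set $\nextVar>R$ and thus, by the stage-$3$ guard $r<\nextVar$, silently refuses to commit for $R$. A single such defector can keep the relay below the $2f+1$ commit threshold. I would rule it out using that $t_0$ is the \emph{first} entry to $R$: a correct $\newRound{R+1,\cdot}$ can originate only at a process already in round $R$ (via \wishToAdvance), hence not before $t_0$, so no $\TC{R+1,\cdot}$ can be assembled and no correct process can set $\nextVar=R+1$ until the round-$R$ pre-commit/commit exchange above has run its course. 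Turning this precedence into a quantitative guarantee — that $\QC{R,1}$ is formed before any $\TC{R+1,\cdot}$ could divert a correct process, using the $\Delta$-slack that the pacing of \wishToAdvance leaves after each entered round — is the delicate step and the only place the argument depends on the exact constant $4\dissTime$.
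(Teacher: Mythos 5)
Your main line coincides with the paper's proof: for \ref{prop:sync1} the same dichotomy (either some \textoverline{finalize} for round $\rMax{t}$ exists, in which case \Cref{proposition:finalizeMessage} finishes, or the committers keep timing out and push pre-commits through the $f+1$ distinct relays of $\rMax{t}$ until a correct one completes the stages), and for \ref{prop:sync2} the same $\dissTime$-by-$\dissTime$ timeline, including the key accounting that any process which already entered the round has sent its commit to \relay{\rMax{t},1} in stage 5, yielding $c_1=4\dissTime$.

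The divergence is the step you isolate and leave open, and there you have found a real issue rather than missed a standard argument: the paper does not prove that step at all --- its proof simply asserts that \relay{\rMax{t},1} holds $2f+1$ commits by $t_0+3\dissTime$, tacitly assuming that no correct process has $\nextVar>\rMax{t}$ when the \textoverline{pre-commit} reaches it. Your suspicion that this cannot be derived from the lemma's stated hypotheses is correct: the premise of \Cref{prop:synchronizerSyncronization} restricts only round \emph{entries} (\proposeRound{\cdot} signals), whereas \wishToAdvance is an input the synchronizer must accept at any time. Concretely, write $R\triangleq\rMax{t}$ and take $n=3f+1$, $f\geq 2$: signal \wishToAdvance at the first entrant $p$ of $R$ just after $t_0$, so that $p$ sets $\nextVar=R+1$ and sends \newRound{R+1,1} to a Byzantine \relay{R+1,1}; that relay combines $p$'s pre-commit with $f$ Byzantine ones into \TC{R+1,1} and delivers it to exactly one other correct process $q$, which sets $\nextVar=R+1$ and thereafter ignores \TC{R,k} for every $k$. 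Now \relay{R,1} can only ever collect $2f$ commits for $R$ ($p$'s stage-5 commit plus the $2f-1$ undiverted correct processes), so no \QC{R,k} forms and $q$ never enters $R$; meanwhile no correct relay of $R+1$ ever receives the $f+1$ pre-commits needed for a further \textoverline{pre-commit}, and no \QC{R+1,\cdot} can gather $2f+1$ commits since only $q$ ever commits for $R+1$. Hence no correct process ever enters a round above $R$, the lemma's premise holds, and yet both \ref{prop:sync1} and \ref{prop:sync2} fail.

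What makes the lemma true in the paper is precisely the ingredient you point to: in the composed system of \Cref{sec:decomposition}, the timer does not signal \wishToAdvance until $c_1+\Delta$ after a round is entered, so by the validity argument (\Cref{proposition:mustCallAdvance}) no correct pre-commit --- hence no \textoverline{pre-commit}, hence no diversion of any $\nextVar$ --- for a round above $R$ can exist before $t_0+c_1+\Delta>t_0+2\dissTime$, and your four-$\dissTime$ timeline then goes through verbatim. So your write-up is the paper's proof plus an honest flag of a gap the paper leaves implicit; repairing the statement itself requires either adding to \Cref{prop:synchronizerSyncronization} a clause excluding \wishToAdvance calls in rounds $\geq\rMax{t}$ during the window (as \Cref{prop:synchronizerProgress} already does for progress), or proving the lemma only under the pacing guaranteed by \Cref{alg:timerLeader}.
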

\begin{proof}
Let $t$ be a point in time during the execution and $r = \rMax{t}$.
Let \process{i} be the first correct process that enters round $r$ at time $t_0$.
Such a process exists by the definition of \rMax{t}.
\process{i} is at round~$r$, so it received a \textoverline{commit} message for round $r$.
A \textoverline{commit} message is a threshold signature of $(2f+1)$-of-$n$ commit messages, at least $f+1$ of which were sent by correct processes.
Denote by $S$ the set of correct processes whose signatures on commit messages are included in the \textoverline{commit} message \process{i} received.
The processes in $S$ are either in round $r$ at time $t$ or in smaller rounds $r' < r$.
	
We now prove the two sub-properties of  \Cref{prop:synchronizerSyncronization}:

\textbf{\ref{prop:sync1}.}
If some correct process receives \textoverline{finalize} for round $r$, by \Cref{proposition:finalizeMessage}, there are at least $f+1$ correct processes in $r$ and we are done.

Assume no correct process receives \textoverline{finalize}.
Then, the processes in $S$ continue to timeout and send pre-commit messages for round $r$ to the relays of $r$.
This guarantees that eventually, a correct relay for $r$ receives at least $f+1$ pre-commit messages, as Property \ref{prop:relayFunc:differentProcesses} of the relay function ensures $f+1$ different relays for each round.
This relay eventually completes the stages of the algorithm, allowing all correct processes to advance to round $r$.

\textbf{\ref{prop:sync2}.} Because \process{i} receives \textoverline{commit} for round $r$ at time $t_0 \geq \GST$, as argued above, $f+1$ correct processes have sent a commit message for round $r$ by time $t_0$.
Because a process sends pre-commit to \relay{r,1} before sending a commit to any relay for round $r$ (Lines \ref{alg:line:sendPC1} or \ref{alg:line:sendPCtoRelay1}), these messages, too, are sent by time $t_0$.
Therefore, by time $t_0+\delta$, \relay{r,1} receives $f+1$ pre-commit messages and sends a \textoverline{pre-commit} message to all processes.
By $t_0+2\dissTime$ all the correct processes receive the \textoverline{pre-commit} message sent from the first relay, by $t_0 +3\dissTime$ the relay receives $2f+1$ commit messages (along with any process that already entered $r$, \Cref{alg:line:sendCommitToRelay1}), and by $t_2 \leq t_0 + 4\dissTime$ all the correct processes receive the \textoverline{commit} message and enter round $r$.
\end{proof}
\begin{restatable}{lemma}{lumierePropProgress} \label{claim:lumierePropProgress}
	\Cref{alg} satisfies progress (\Cref{prop:synchronizerProgress}) with $c_2 = 4\dissTime$.
\end{restatable}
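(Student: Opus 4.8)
The plan is to follow the template of the proof of \Cref{claim:lumierePropSynchronization}, deriving \ref{prop:prog1} from an eventual-progress argument and \ref{prop:prog2} from a hop-by-hop latency count, but now seeding the argument with the $f+1$ \wishToAdvance calls rather than with an already-received \textoverline{commit}. Write $r = \rMax{t}$ and let $P$ be the set of $f+1$ correct processes that call \wishToAdvance while in round $r$ by $t_0$; each such process sets $\nextVar = r+1$ and sends $\newRound{r+1,1}$ to $\relay{r+1,1}$ (\Cref{alg:relibra:wishToAdvance}). Before anything else I would establish a structural invariant: under the hypothesis that no correct process calls \wishToAdvance in any round above $r$, no correct process ever enters a round above $r+1$. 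This follows exactly as in \Cref{proposition:mustCallAdvance}: entering a round $r' > r+1$ requires a \textoverline{commit} for $r'$, hence a \textoverline{pre-commit}, hence a first correct pre-commit for $r'$, which by \Cref{proposition:mustCallAdvance} can only be emitted through a \wishToAdvance call in round $r'-1 > r$, contradicting the hypothesis.

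The invariant matters because it guarantees that every correct process sits in a round $\le r+1$ and will therefore answer a \textoverline{pre-commit} for $r+1$ with a commit in stage~3, so that a relay can actually gather the $2f+1$ commits it needs. For \ref{prop:prog1} I would then reuse the liveness-of-a-correct-relay argument used for \ref{prop:sync1}: as long as no process in $P$ has entered $r+1$, its pre-commit/commit timeout (\Cref{alg:relibra:attemptedTC}) makes it contact the successive relays $\relay{r+1,2}, \relay{r+1,3}, \dots$, and since \ref{prop:relayFunc:differentProcesses} provides $f+1$ distinct relays for round $r+1$, at least one is correct and over reliable links eventually collects the $f+1$ pre-commits of $P$. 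That relay broadcasts \textoverline{pre-commit}, every correct process (all in rounds $\le r+1$) replies with a commit, so the relay accumulates $\ge 2f+1$ commits, broadcasts \textoverline{commit}, and every correct process receiving it enters $r+1$; in particular at least one does.

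For \ref{prop:prog2}, assuming $t_0 \ge \GST$ and $\relay{r+1,1}$ correct, I would run the same four-hop count as in \ref{prop:sync2}: the $f+1$ processes of $P$ send $\newRound{r+1,1}$ to the correct first relay by $t_0$; by $t_0+\dissTime$ it holds $f+1$ pre-commits and broadcasts \textoverline{pre-commit}; by $t_0+2\dissTime$ all correct processes receive it and send their commits to $\relay{r+1,1}$; by $t_0+3\dissTime$ it holds $\ge 2f+1$ commits and broadcasts \textoverline{commit}; and by $t_2 \le t_0+4\dissTime$ every correct process receives it and is in round $r+1$, yielding $c_2 = 4\dissTime$.

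The step I expect to be the main obstacle is the structural invariant. Unlike \Cref{prop:synchronizerSyncronization}, whose hypothesis forbids \emph{entering} higher rounds outright, the Progress hypothesis only forbids \emph{calling} \wishToAdvance in higher rounds, so I must separately rule out a correct process being dragged into a round above $r+1$ by some stray \textoverline{commit}; such a process would fall through the ``old round'' guard of stage~3 and refuse to commit for $r+1$, potentially starving the correct relay of its $2f+1$ commits. Once the invariant is in place, both \ref{prop:prog1} and \ref{prop:prog2} reduce to the liveness and timing arguments already established for \Cref{claim:lumierePropSynchronization}.
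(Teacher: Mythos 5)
Your proof is correct and takes essentially the same route as the paper's: the same invariant that no correct process exceeds round $r+1$ (derived from \Cref{proposition:mustCallAdvance}), the same timeout-driven iteration through the $f+1$ distinct relays of \ref{prop:relayFunc:differentProcesses} for \ref{prop:prog1}, and the same four-hop $4\dissTime$ count for \ref{prop:prog2}. The only cosmetic difference is the case split in \ref{prop:prog1}: the paper splits on whether some correct process receives \textoverline{finalize} (using the finalize timeout to keep processes already in $r+1$ sending pre-commits), whereas you split on whether any process of $P$ has entered $r+1$ and observe that this case is immediately done; both versions are sound.
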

\begin{proof}
	Let $t$ be a certain point in time during the execution, $r = \rMax{t}$, and assume that by some time $t_0$, at least $f+1$ correct processes call \wishToAdvance while in round $r$ and not while in any round $r' > r$.
	By \Cref{proposition:mustCallAdvance}, no correct process enters any round $r'' > r+1$, \ie, the above group of at least $f+1$ processes can eventually be in either round $r$ or $r+1$ (by \Cref{prop:monotonicRounds}, rounds are monotonically increasing so they cannot be in any round lower than round~$r$).
	Denote this group of processes by $S$.
	
	We now prove the two sub-properties of \Cref{prop:synchronizerProgress}:
	
	\textbf{\ref{prop:prog1}.} Once some correct process receives \textoverline{finalize} for round $r+1$, by \Cref{proposition:finalizeMessage}, there are at least $f+1$ correct processes in $r$ and we are done.
	
	Assume no correct process receives \textoverline{finalize}.
	Then, the processes in $S$ continue to timeout and send pre-commit messages for round $r+1$ to the relays of $r+1$.
	This can either be done if the process in $S$ is in round $r$ and still did not receive a \textoverline{commit} message for $r+1$, or if the process is already in $r+1$ but did not receive \textoverline{finalize} message for $r+1$ and continues to timeout (\Cref{alg:relibra:timeoutFinalize}).
	This guarantees that eventually, a correct relay for $r+1$ receives at least $f+1$ pre-commit messages, as Property \ref{prop:relayFunc:differentProcesses} of the relay function ensures $f+1$ different relays for each round.
	This relay eventually completes the algorithm, allowing all correct processes to advance to round $r+1$.

	\textbf{\ref{prop:prog2}.} 
	Assume \relay{r+1,1} is correct.
	By $t_0 \geq \GST$ at least $f+1$ correct processes call \wishToAdvance while in round $r$, therefore, by $t_0+\dissTime$ \relay{r+1,1} receives enough pre-commit messages for round $r+1$ to engage the first phase of the algorithm, and by $t_0+2\dissTime$ all the correct processes receive the \textoverline{pre-commit} message, and by $t_0+3\dissTime$ the relay receives $2f+1$ commit messages.
	Even if a process enters round $r+1$ thorough a different relay than \relay{r+1,1}, it still sends to \relay{r+1,1} the commit message for round $r+1$ (in \Cref{alg:line:sendCommitToRelay1}).
	Thus, the first relay has enough commit messages, and by $t_2 \leq t_0+4\dissTime$ all the correct processes receive the \textoverline{commit} message and enter round $r+1$.
\end{proof}

The following theorem follows directly from Lemmas \ref{lem:monotonicalRounds}, \ref{proposition:mustCallAdvance}, \ref{claim:lumierePropSynchronization}, and \ref{claim:lumierePropProgress}.
\begin{theorem}
	\Cref{alg} satisfies the synchronizer abstraction.
\end{theorem}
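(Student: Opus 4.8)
The plan is to observe that this is a pure composition theorem: the synchronizer abstraction of \Cref{sec:problemDef:synchronizer} is defined by exactly four properties, and each of the four preceding lemmas establishes one of them. So the theorem reduces to checking that the lemmas collectively discharge every requirement of the abstraction, with no additional argument needed at the level of the theorem itself.

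First I would recall the definition: to satisfy the synchronizer abstraction, \Cref{alg} must guarantee monotonically increasing rounds (\Cref{prop:monotonicRounds}), validity (\Cref{prop:validity}), stabilization (\Cref{prop:synchronizerSyncronization}), and progress (\Cref{prop:synchronizerProgress}). I would then pair each property with the lemma that proves it: \Cref{lem:monotonicalRounds} yields monotonically increasing rounds; \Cref{proposition:mustCallAdvance} yields validity; \Cref{claim:lumierePropSynchronization} yields stabilization with the explicit constant $c_1 = 4\dissTime$; and \Cref{claim:lumierePropProgress} yields progress with $c_2 = 4\dissTime$. Since these four properties are precisely the defining requirements of the abstraction and each is established by one of the lemmas, their conjunction immediately gives the theorem.

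The main subtlety — rather than a genuine obstacle — is confirming that the match is exhaustive and that the constants produced by the lemmas are the ones the abstraction is meant to expose. In particular, stabilization and progress each carry a latency bound ($c_1$ and $c_2$, respectively), and I would check that the lemmas supply concrete values ($4\dissTime$ in both cases) rather than mere existence claims, since these constants feed the round-duration timer of \Cref{alg:timerLeader} and ultimately the expected-constant-latency argument. All the real difficulty — the relay-iteration mechanism for tolerating Byzantine relays and the finalization phase for rescuing lagging processes — was already absorbed into the proofs of \Cref{claim:lumierePropSynchronization} and \Cref{claim:lumierePropProgress}, so the theorem follows directly by assembling the four lemmas.
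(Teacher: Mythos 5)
Your proposal matches the paper exactly: the paper proves this theorem in one line, stating that it follows directly from Lemmas \ref{lem:monotonicalRounds}, \ref{proposition:mustCallAdvance}, \ref{claim:lumierePropSynchronization}, and \ref{claim:lumierePropProgress}, which establish precisely the four defining properties of the synchronizer abstraction (with the constants $c_1 = c_2 = 4\dissTime$ as you note). Your pairing of lemmas to properties is correct and complete, so the proposal is correct and takes the same approach.
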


\subsection{Performance: Latency and Message Complexity} \label{sec:algs:performance}
We prove that our algorithm has expected constant latency and linear message complexity.

The proof of the next proposition uses hypergeometrical distribution.
A random variable $Z$ which is hypergeometrically distributed with the parameters $Z \thicksim \textit{HG}(N,D,K)$ describes the number of successes (random draws where the drawn object has a specific feature) in a finite population of $N$ objects, of which $D$ objects have the feature.
$K$ is the total number of draws, which are done \emph{without} replacement of the objects.
The pmf of this distribution is
\begin{equation*}
\Pr \left[ Z = z \right] = \frac{\binom{D}{z}\binom{N-D}{K-z}}{\binom{N}{K}}.
\end{equation*}
\begin{restatable}{proposition}{expectedByzantineRelays} \label{prop:expectedByzantineRelays}
	For any round $r$, let $X_r$ be the number of consecutive Byzantine relays until the first correct relay.
	Then, $\forall r \colon \mathbb{E} \left[ X_r \right] \leq 3/2$.
\end{restatable}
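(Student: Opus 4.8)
The plan is to read off the exact law of $X_r$ from the two relay-function properties and then bound its mean using the model assumption $f < n/3$. By \ref{prop:relayFunc:differentProcesses} and \ref{prop:relayFunc:random}, the relays $\relay{r,1}, \relay{r,2}, \ldots$ of a fixed round $r$ are obtained by sampling uniformly \emph{without replacement} from the $n$ processes. Since the adversary is oblivious, the set of at most $f$ Byzantine processes is fixed independently of $\mathcal{R}$, so each drawn relay is either Byzantine (at most $f$ of them) or correct (at least $n-f \geq 2f+1$ of them), and the randomness is entirely over $\mathcal{R}$. The key observation is that the event $\{X_r \geq k\}$ is exactly the event that the first $k$ relays are all Byzantine.

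First I would express this tail probability through the hypergeometric distribution introduced just before the statement: letting $Z \sim \textit{HG}(n,f,k)$ count the Byzantine processes among the first $k$ distinct relays, we have $\Pr[X_r \geq k] = \Pr[Z = k] = \binom{f}{k}/\binom{n}{k}$. Note this vanishes for $k > f$, which recovers the deterministic guarantee (from \ref{prop:relayFunc:differentProcesses}) that a correct relay appears among the first $f+1$, so $X_r \leq f$ always. Summing the tail gives $\mathbb{E}[X_r] = \sum_{k=1}^{f} \binom{f}{k}/\binom{n}{k}$.

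Next I would evaluate this sum in closed form. One route is the standard identity $\sum_{k=0}^{f}\binom{f}{k}/\binom{n}{k} = (n+1)/(n+1-f)$; subtracting the $k=0$ term yields $\mathbb{E}[X_r] = f/(n-f+1)$. An equivalent and shorter derivation avoids the sum entirely: extend the relay selection to a uniformly random permutation of all $n$ processes, and for each Byzantine process $b$ let $I_b$ indicate that $b$ is selected before every correct process. A Byzantine process is counted in $X_r$ precisely when it precedes the first (hence all) correct relays, so $X_r = \sum_b I_b$; by symmetry among $b$ and the $n-f$ correct processes, $\Pr[I_b = 1] = 1/(n-f+1)$, giving $\mathbb{E}[X_r] = f/(n-f+1)$ immediately.

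Finally, I would substitute the model bound $f < n/3$, i.e. $n \geq 3f+1$, so that $n-f+1 \geq 2f+2$ and hence $\mathbb{E}[X_r] = f/(n-f+1) \leq f/(2f+2) < 1/2 \leq 3/2$, which proves the claim (in fact with substantial slack). The only step requiring genuine care is evaluating the tail sum; the indicator/symmetry reformulation is the safest way to sidestep the combinatorial identity, after which the closing inequality is routine.
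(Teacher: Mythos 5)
Your proof is correct, but it takes a genuinely different route from the paper's. The paper computes the probability mass function directly: writing $\Pr\left[X = x\right] = \Pr\left[Y = x-1\right]\cdot\frac{n-f}{n-(x-1)}$ with $Y \sim \textit{HG}(n,f,x-1)$, it evaluates $\sum_{x=1}^{f+1} x\Pr\left[X=x\right]$ in closed form to $\frac{n+1}{n-f+1}$ and bounds that by $3/2$ using $f<n/3$. You instead sum tail probabilities $\Pr\left[X_r \geq k\right] = \binom{f}{k}/\binom{n}{k}$, and your indicator/symmetry argument (each Byzantine process precedes all $n-f$ correct ones with probability $1/(n-f+1)$, so $\mathbb{E}\left[X_r\right] = f/(n-f+1)$) is arguably cleaner than the paper's unexplained evaluation of its sum, which is the only step it leaves implicit. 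One discrepancy worth flagging: you read $X_r$ literally as the count of Byzantine relays, \emph{excluding} the first correct one, obtaining $\mathbb{E}\left[X_r\right] = f/(n-f+1) < 1/2$; the paper's pmf, despite the statement's wording, actually describes the \emph{position} of the first correct relay (your $X_r + 1$), whose expectation is $\frac{n+1}{n-f+1} \leq 3/2$. Both quantities satisfy the claimed bound, and your reading is the one faithful to the statement; but note that the constant $3/2$ is consumed downstream (in \Cref{lemma:expectationOfStabilization} and \Cref{lemma:expectationOfProgress}) as the expected number of relay iterations \emph{including} the correct one, so under your interpretation those lemmas need $\mathbb{E}\left[X_r\right]+1$, which your stronger bound covers with room to spare. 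In short: correct, stronger, and more elementary, with only this off-by-one bookkeeping to reconcile against how the bound is used later.
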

\begin{proof}
	Since the relays for each round $r$ are randomly chosen regardless of $r$, the expectation of $X_r$ is the same for all $r$, and for brevity, we omit the round notation.
	
	Based on the definition of $X$, and the fact that relays for each round are randomly chosen without replacement:
	\begin{equation*}
	\Pr \left[ X = x \right] = \Pr \left[ Y = x-1 \right] \cdot \frac{x}{n - \left( x - 1 \right)},
	\end{equation*}
	where $Y$ is a hypergeometrical distribution with the following parameters: $Y \thicksim \textit{HG}(n,f,x-1)$. 
	
	Thus, for any $n$ and $f < n/3$:
	\begin{align*}
	& \Pr \left[ X  = x \right] = \Pr \left[Y = x-1 \right] \cdot \frac{n-f}{n-\left(x-1\right)} = \frac{\binom{f}{x-1} \binom{n-f}{x-1 - \left( x-1\right)}}{\binom{n}{x-1}} \cdot \frac{n-f}{n-\left(x-1\right)} \\
	&= \frac{f! \left( n-x+1 \right)!}{n! \left( f-x+1 \right)!} \cdot \frac{n-f}{n-\left(x-1\right)}
	\end{align*}
	And in expectation
	\begin{align*}
	& \mathbb{E} \left[ X \right] = \sum_{x=1}^{f+1} x \Pr \left[ X = x \right] = \sum_{x=1}^{f+1} x \cdot \frac{f!(n-x+1)!}{n!(f-x+1)!} \cdot \frac{n-f}{n-x+1} = \frac{n+1}{n-f+1} \leq \frac{3}{2}.
	\end{align*} 
\end{proof}

\begin{restatable}{lemma}{expectationOfStabilization} \label{lemma:expectationOfStabilization}
For any $t \geq \GST$ let $t_0$ be the first time during the run where a correct process enters \rMax{t}.
There exists a time $t_1 \geq t_0$ such that up to $t_1$ either \textit{(i)}~at least $f+1$ correct processes are in \rMax{t} \textbf{or} \textit{(ii)}~a correct process enters a round $r > \rMax{t}$;
and $\mathbb{E} \left[ t_1 - \max\{t_0,\GST\} \right] \leq \frac{3}{2} \cdot 6\dissTime$.
\end{restatable}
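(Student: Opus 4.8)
The plan is to marry the deterministic stabilization mechanism of \Cref{claim:lumierePropSynchronization} with the probabilistic bound of \Cref{prop:expectedByzantineRelays}. Write $r = \rMax{t}$ and $\tau_0 = \max\{t_0,\GST\}$, and let $t_1$ be the \emph{first} time at which (i) or (ii) holds. Because $t_1$ is the first such time, the cleanest route is to fix the relay randomness $\pi$ and prove a purely deterministic statement: if no correct process enters a round larger than $r$ before some time $T'(\pi)$, then at least $f+1$ correct processes are in round $r$ by $T'(\pi)$. This gives $t_1 \le T'(\pi)$ in every run consistent with $\pi$ --- either (ii) fires before $T'(\pi)$, in which case $t_1$ is only smaller, or it does not, in which case (i) holds by $T'(\pi)$. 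It then remains to take $\mathbb{E}_{\pi}[T'(\pi)-\tau_0]$.

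For the deterministic core I would reuse the set $S$ from the proof of \Cref{claim:lumierePropSynchronization}: since the first correct process entered $r$ via a \textoverline{commit}, there are at least $f+1$ correct processes (the set $S$) that already sent a commit, hence received a \textoverline{pre-commit}, hence sent a pre-commit to \relay{r,1} for round $r$, so all of $S$ are engaged with round $r$ by $t_0$. As long as no correct process advances past $r$ and none receives a \textoverline{finalize} (which, by \Cref{proposition:finalizeMessage}, would already establish (i)), every process in $S$ keeps timing out and sending pre-commits to successive relays of $r$, via either the pre-commit/commit timeout or the finalize timeout. Let $k^\ast = X_r+1$ be the first correct relay for round $r$. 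The key deterministic claim is that once $f+1$ members of $S$ have sent a pre-commit to \relay{r,k^\ast}, this relay gathers $f+1$ pre-commits, broadcasts \textoverline{pre-commit}, collects $2f+1$ commits, broadcasts \textoverline{commit}, and thereby drives at least $f+1$ correct processes into round $r$ --- exactly the completion argument of \ref{prop:sync2} applied to $k^\ast$ instead of \relay{r,1}.

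The timing is the main obstacle. I would split $T'(\pi)-\tau_0$ into the time to \emph{reach} the first correct relay $k^\ast$ and the time for $k^\ast$ to \emph{complete}; the latter is a constant number of $\dissTime$ windows as in \ref{prop:sync2}. Reaching $k^\ast$ means cycling past the $X_r$ preceding Byzantine relays, and the delicate point is the per-relay delay: a Byzantine relay need not simply time out after $2\dissTime$, since by replying with \textoverline{pre-commit} and \textoverline{commit} it can reset a process's timeout and string it through the commit and finalize phases before the process moves on. I would bound this per-relay delay by a small constant number of $\dissTime$, exploiting the observation that a relay which strings a process all the way into round $r$ thereby \emph{contributes} to (i), and so cannot simultaneously be used to delay while keeping fewer than $f+1$ correct processes in $r$. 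Combining the per-relay budget over the $X_r$ Byzantine relays with the constant completion cost, and invoking $\mathbb{E}[X_r]\le 3/2$ from \Cref{prop:expectedByzantineRelays}, the expectation should telescope to $\frac{3}{2}\cdot 6\dissTime$; getting the constants to land exactly on this value is the part that demands the most careful bookkeeping.

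A secondary subtlety I would need to address is that the processes in $S$ are not synchronized: they may enter their timeout sequences at different (possibly pre-\GST) times, so at $\tau_0$ they occupy different relay indices. I would argue that a process already at or beyond $k^\ast$ has, by the time it passed that index, already delivered a pre-commit to \relay{r,k^\ast} (which arrives by $\GST+\dissTime \le \tau_0+\dissTime$), whereas a lagging process reaches $k^\ast$ within the per-relay budget above; since $|S| \ge f+1$, relay $k^\ast$ accumulates the required $f+1$ pre-commits within the claimed window regardless of the adversary's selective delivery to the Byzantine relays, which is what closes the argument.
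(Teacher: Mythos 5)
Your proposal takes essentially the same route as the paper's proof: the same set $S$ of $f+1$ correct processes already engaged with round \rMax{t} by $t_0$, the same relay-cycling case analysis (a \textoverline{finalize} arrives and \Cref{proposition:finalizeMessage} gives case~(i), a timeout pushes $S$ to the next relay, or some process enters a higher round giving case~(ii)), a $6\dissTime$ budget per relay, and the expectation bound $\frac{3}{2}\cdot 6\dissTime$ via \Cref{prop:expectedByzantineRelays}. The subtleties you flag --- a Byzantine relay stringing processes through the commit/finalize phases, and members of $S$ sitting at different relay indices --- are exactly what the paper's ``each iteration takes at most $6\dissTime$'' accounting absorbs, so your extra care refines rather than departs from the paper's argument.
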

\begin{proof}
	The first correct process that enters $\rMax{t}$ does so because it receives a \textoverline{commit} message at $t_0$.
	This message contains signatures of $2f+1$ processes, of which at least $f+1$ are correct.
	Denote the set of those correct processes as $S$.
	The processes in $S$ have all sent pre-commit messages to \relay{\rMax{t},1} before sending their signed commit messages, no later than $t_0$.
	If no correct process enters a round $r > \rMax{t}$, then each of the processes in $S$ waits up to $2\dissTime$ for a \textoverline{pre-commit} from \relay{r,1}, and if it is received, waits at most $2\dissTime$ for a \textoverline{commit}, and if it is received, waits again at most $2\dissTime$ for a \textoverline{finalize}. 
	
	Thus, within $6\delta$, one of the following occurs first:
	\begin{enumerate}
		\item \label{proof:expectedStabilizationTIme:1} At least one process in $S$ receives a \textoverline{finalize} message, and by \Cref{claim:lumierePropSynchronization} and \Cref{observ:stabilization}, $t_1$ occurs by case~\textit{(i)}.
		\item \label{proof:expectedStabilizationTIme:2} The processes in $S$ timeout and send pre-commit to the next relay. 
		This is repeated until a correct relay completes the stages of the algorithm and $t_1$ occurs by case~\textit{(i)}.
		\item \label{proof:expectedStabilizationTIme:3} A correct process enters round $\rMax{t} +1$ before (\ref{proof:expectedStabilizationTIme:1}) or (\ref{proof:expectedStabilizationTIme:2}) occur, resulting in $t_1$ occurring by case~\textit{(ii)}.
	\end{enumerate}
	Because (\ref{proof:expectedStabilizationTIme:2}) only occurs if the relay is Byzantine, and by \Cref{prop:expectedByzantineRelays} the expected number of relays until a correct one is bounded by $3/2$, we get that within expected $3/2$ iterations, each taking at most $6\delta$, $t_1$ occurs by (\ref{proof:expectedStabilizationTIme:1}) or (\ref{proof:expectedStabilizationTIme:3}), and the upper bound for the expectation is $\mathbb{E} \left[ t_1 - \max \left\{ t_0, \GST \right\} \right] \leq \frac{3}{2} \cdot 6\dissTime = 9\dissTime$.	
\end{proof}

\begin{restatable}{lemma}{lemmaExpectationOfProgress} \label{lemma:expectationOfProgress}
	For any $t \geq \GST$ let $t_0$ be the first time when $f+1$ correct processes call \wishToAdvance while in round $\rMax{t}$.
	There exists a time $t_1 \geq t_0$ such that there is at least one correct process in $\rMax{t}+1$ and $\mathbb{E} \left[ t_1 - \max \left\{ t_0, \GST \right\} \right] \leq \frac{3}{2} \cdot 6\dissTime$.
\end{restatable}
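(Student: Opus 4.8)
The plan is to follow the proof of \Cref{lemma:expectationOfStabilization} essentially line for line, substituting round $\rMax{t}+1$ for \rMax{t} and invoking the progress machinery of \Cref{claim:lumierePropProgress} (in particular \ref{prop:prog2}) wherever that proof used the stabilization machinery. First I would fix the set $S$ of $f+1$ correct processes that call \wishToAdvance while in round \rMax{t} by $t_0$. By \Cref{alg:line:advanceChangeCurr,alg:line:sendPC1}, each such process sets $\nextVar \gets \rMax{t}+1$ and sends a \newRound{\rMax{t}+1,1} message to \relay{\rMax{t}+1,1}; moreover, as long as it neither receives a \textoverline{commit} for $\rMax{t}+1$ nor enters a higher round, the pre-commit/commit timeout of \Cref{alg:relibra:attemptedTC} makes it re-send a pre-commit to the next relay of round $\rMax{t}+1$ every $2\dissTime$. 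Since \ref{prop:relayFunc:differentProcesses} guarantees $f+1$ distinct relays for round $\rMax{t}+1$, these $f+1$ pre-commits are delivered to successive relays until a correct one is reached, which drives the protocol through stages~2--5 and broadcasts \textoverline{commit} for $\rMax{t}+1$, so that at least one correct process enters $\rMax{t}+1$.

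Next I would charge the elapsed time to the relay iteration exactly as in \Cref{lemma:expectationOfStabilization}. Each relay of round $\rMax{t}+1$ occupies at most one $6\dissTime$ window: a process waits at most $2\dissTime$ for the matching \textoverline{pre-commit}, then (if received) at most $2\dissTime$ for the matching \textoverline{commit}, and at most a further $2\dissTime$ before its timeout fires and it advances to the next relay. If the relay is Byzantine, the window ends in a timeout and the processes move on; if it is the first correct relay, the analysis underlying \ref{prop:prog2} shows it finishes the pre-commit and commit phases within $4\dissTime \leq 6\dissTime$, so $t_1$ occurs inside that window. Letting $X$ denote the number of relays contacted up to and including this first correct relay, \Cref{prop:expectedByzantineRelays} gives $\mathbb{E}\left[ X \right] \leq 3/2$; since each of the $X$ windows lasts at most $6\dissTime$ and $t_1$ falls in the last one, I obtain $\mathbb{E}\left[ t_1 - \max\{t_0,\GST\} \right] \leq \tfrac{3}{2}\cdot 6\dissTime$, as claimed. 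The $6\dissTime$ charge is the same conservative estimate used for stabilization; for progress the commit phase alone, costing $4\dissTime$ per relay, would already suffice, so the stated bound is comfortably valid.

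The one point needing extra care---the only place the argument genuinely departs from \Cref{lemma:expectationOfStabilization}---is justifying that some correct process actually enters $\rMax{t}+1$, rather than skipping straight to a higher round while a correct relay of $\rMax{t}+1$ is still being sought. Here I would appeal to validity (\Cref{proposition:mustCallAdvance}): for any correct process to enter a round $r > \rMax{t}+1$, some correct process must first call \wishToAdvance in round $\rMax{t}+1$, which in turn requires having entered $\rMax{t}+1$. Hence no correct process can bypass $\rMax{t}+1$, so the existence half of the statement holds unconditionally, while the expectation bound is supplied by the relay-iteration count above. As in the stabilization proof, the $\max\{t_0,\GST\}$ bookkeeping merely records that the per-message $\dissTime$ bounds, and hence the $6\dissTime$ window estimate, are valid only from \GST onward. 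I expect this no-skipping argument to be the main (though mild) obstacle, since everything else transfers verbatim from \Cref{lemma:expectationOfStabilization}.
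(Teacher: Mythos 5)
Your proposal is correct and follows essentially the same route as the paper's proof, which likewise reuses the relay-iteration argument of \Cref{lemma:expectationOfStabilization} (processes in $S$ time out every at most $6\dissTime$ and move to the next relay of $\rMax{t}+1$ until a correct relay completes the stages) and bounds the expectation by $\tfrac{3}{2}\cdot 6\dissTime$ via \Cref{prop:expectedByzantineRelays}. Your explicit no-skipping argument via validity (\Cref{proposition:mustCallAdvance}) is a detail the paper leaves implicit, but it is a refinement of, not a departure from, the same proof.
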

\begin{proof}
	Let $S$ be the set of correct processes that call \wishToAdvance in round $\rMax{t}$ by $t_0$.
	The reasoning for this property is similar to the one in the proof of \Cref{lemma:expectationOfStabilization}, \ie, since the processes in $S$ timeout at most every $6\dissTime$ until they move to the next relay of round $\rMax{t}+1$, and there are enough processes in $S$ to eventually send $f+1$ pre-commit messages to a correct relay, and once this relay is found it can complete the stages of the algorithm.
	In expectation, the time until $t_1$ depends on the number of Byzantine relays until a correct one.
	By \Cref{prop:expectedByzantineRelays} it is bounded by $3/2$, and therefore $\mathbb{E} \left[ t_1 - \max \left\{ t_0, \GST \right\} \right] \leq \frac{3}{2} \cdot 6\dissTime = 9\dissTime$.
\end{proof}

\begin{theorem}
The synchronizer algorithm along with a Timer local function (as defined in \Cref{sec:decomposition}) achieves expected constant latency and linear message complexity.
\end{theorem}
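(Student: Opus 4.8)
The plan is to combine the per-round bounds of \Cref{lemma:expectationOfStabilization} and \Cref{lemma:expectationOfProgress} with the fact that the first relay of each round is correct with constant probability, so that only an expected-constant number of rounds is traversed before a synchronization time. I would organize the argument around the epochs $t^\ell$ from \Cref{sec:decomposition}, where $t^\ell$ is the first time after $t^{\ell-1}$ that a correct process enters a new maximum round. Two facts suffice: \emph{(a)}~the expected duration of a single epoch is $O(\delta+\Delta)$, and \emph{(b)}~the expected number of epochs until a synchronization time is $O(1)$; together they bound the latency, and an analogous per-epoch message count bounds the message complexity.

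For \emph{(a)}, at $t^\ell \geq \GST$ a correct process enters $\rMax{t^\ell}$, and since the Timer of \Cref{alg:timerLeader} prevents any correct process from calling \wishToAdvance for $c_1+\Delta$ after entering a round, \Cref{lemma:expectationOfStabilization} shows that within expected $9\delta$ either $f+1$ correct processes are in $\rMax{t^\ell}$ or a correct process has already advanced (ending the epoch). In the former case the Timer makes those $f+1$ processes call \wishToAdvance after the deterministic wait $c_1+\Delta$, and then \Cref{lemma:expectationOfProgress} gives a further expected $9\delta$ until a correct process enters $\rMax{t^\ell}+1$, \ie until $t^{\ell+1}$. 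Hence $\mathbb{E}\left[t^{\ell+1}-t^\ell\right] \leq 18\delta + c_1 + \Delta$, a constant. For \emph{(b)}, Property~\ref{prop:relayFunc:random} makes $\relay{r,1}$ uniformly distributed and, under the oblivious adversary, independent of the set of corrupted processes, so it is correct with probability $(n-f)/n \geq 2/3$; moreover consecutive epochs advance the maximum round by exactly one (by validity, \Cref{proposition:mustCallAdvance}), so these draws are independent across epochs. Whenever the current maximum round has a correct first relay, \Cref{observ:stabilization} together with \ref{prop:sync2}---exactly as in the main theorem of \Cref{sec:decomposition}---guarantees all correct processes enter it within $c_1$ and the Timer keeps them there for $c_1+\Delta$, so a synchronization time occurs. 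The number of epochs until such a round is therefore stochastically dominated by a geometric variable with success probability $\geq 2/3$ and expectation $\leq 3/2$. Combining \emph{(a)} and \emph{(b)} bounds the expected latency by $O(\delta+\Delta)$, a constant.

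For message complexity I would count the messages sent by correct processes per round and multiply by the expected $O(1)$ number of rounds. Within one round, a correct process contacts the relays $\relay{r,1},\relay{r,2},\ldots$ in order, switching every $2\delta$, and stops once it receives the response it awaits; since a correct relay completes the round and, by \Cref{prop:expectedByzantineRelays}, the expected index of the first correct relay is $\leq 3/2$, each correct process contacts an expected $O(1)$ relays and sends $O(1)$ messages (pre-commit, commit, finalize) to each, for $O(n)$ point-to-point messages overall. In addition, only an expected $O(1)$ \emph{correct} relays broadcast per round---Byzantine relays' broadcasts are not charged to correct processes---each costing $n$ messages, contributing a further expected $O(n)$. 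The finalization phase (stages~6--7) is what keeps this bounded: once a lagging process learns via a \textoverline{finalize} that $f+1$ correct processes have entered the round, it stops retransmitting pre-commits, so the worst-case $O(n^2)$ pattern in which $f$ lagging processes each cycle through all $f+1$ relays occurs only with vanishing probability and does not affect the expectation. Summing $O(n)$ expected messages per round over the expected-constant number of rounds yields expected $O(n)$ message complexity.

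The main obstacle I anticipate is the correlation between the number of rounds traversed and the cost incurred per round: both the event ``round $r$ has a correct first relay'' and the time and messages spent in round $r$ are determined by the same relay draws, so linearity of expectation cannot be invoked naively. I would resolve this with a stopping-time decomposition, bounding the conditional expected cost of a round given that its first relay is Byzantine---still a constant, since conditioning on at least one Byzantine relay only inflates the expected number of contacted relays by a constant factor---and then applying Wald's identity to the geometric number of such rounds. A secondary subtlety is making the informal ``stops once it receives the response it awaits'' precise: I would show that after the first correct relay of a round completes its broadcasts, every correct process enters (and then finalizes) the round within $O(\delta)$ and ceases contacting further relays, which is exactly what keeps both the relay-contact count and the number of broadcasting correct relays at $O(1)$.
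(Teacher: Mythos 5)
Your proposal is correct and follows essentially the same route as the paper's own proof: the same decomposition into epochs $t^\ell$, the same use of \Cref{lemma:expectationOfStabilization}, \Cref{lemma:expectationOfProgress}, and the timer wait to bound each epoch by $O(\dissTime+\Delta)$, and the same constant-probability argument (correct first relay with probability $\frac{n-f}{n}$) to bound the number of epochs until synchronization. The only difference is bookkeeping: the paper combines these ingredients by writing and solving the recurrence $E_1 \leq E_2 + \frac{n-f}{n}\cdot 4\dissTime + \frac{f}{n}\left(E_1+E_3\right)$, whereas you unroll it into a geometric number of epochs and apply Wald's identity---and your explicit handling of the correlation (conditioning the per-epoch cost on a Byzantine first relay) is in fact slightly more careful than the paper, which plugs the unconditional $E_3$ into that branch.
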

\begin{proof}
The latency for our algorithm is based on the definition in \Cref{sec:problemDef:performance}.
We go over all possible states after \GST the correct processes in our algorithm can be in, and look at the expected latency until the synchronization time.	
Let $t^0 = \GST$ and for all $\ell \geq 1$ let $t^\ell$ represent the first time after $t^{\ell-1}$ that a correct process enters a new maximum round.
By \Cref{lem:indefinitelyNewRounds}, in an infinite run, $t^\ell$ eventually occurs for any $\ell \geq 0$.
For any time $t \geq \GST$ during the run, let $\syncTime{t}$ be the first time after $t$ until a synchronization time (\Cref{def:syncTime}).
To calculate the expected latency of our algorithm, we need to show that for any $t \geq \GST$,  $E_1 \triangleq \mathbb{E} \left[ \syncTime{t} - t \right] \leq O(\dissTime)$.

Denote $E_2$ as the expected time from any time $t \geq \GST$ until the next $t^\ell$, \ie, for any $l \geq 0$ and $t$, $E_2 \triangleq \mathbb{E} \left[ \min_{t^\ell \geq t} \left\{ t^\ell \right\} -t \right]$
and $E_3 \triangleq \mathbb{E} \left[ t^{\ell+1} - t^{\ell} \right]$. 
If \relay{\rMax{t^\ell},1} is correct, then based on \ref{prop:prog2}, by $t^\ell + 4\dissTime$ all the correct processes enter \rMax{t^\ell}.
Therefore:
\newcommand\numberthis{\addtocounter{equation}{1}\tag{\theequation}}
\begin{align*}
 E_1 & \leq E_2 +  \underbrace{\frac{n-f}{n}}_{\substack{\text{Probability that} \\ \text{\relay{\rMax{t^\ell},1}} \\ \text{is correct}}} \cdot \underbrace{4\dissTime}_{\substack{ \text{The maximum time for} \\ \text{all correct processes} \\ \text{to enter a round} \\ \text{(\Cref{claim:lumierePropSynchronization})} }} + \underbrace{\frac{f}{n}}_{\substack{\text{Probability that} \\ \text{\relay{\rMax{t^\ell},1}} \\ \text{is Byzantine}}} \cdot \underbrace{  \mathbb{E} \left[ \syncTime{t^{\ell+1}} - t^\ell \right] }_{ \substack{ \text{The expected time until all} \\ \text{correct processes} \\ \text{enter \rMax{\syncTime{t^{\ell+1}} }}}} =  \\
 & = E_2 +  \frac{n-f}{n} \cdot 4\dissTime + \frac{f}{n} \cdot \big( E_1 + \underbrace{\mathbb{E} \left[ t^{\ell+1} - t^\ell \right]}_{=E_3} \big) \\
 &\Rightarrow  E_1 \leq \frac{n}{n-f} \left( E_2 + \frac{n-f}{n}\cdot 4\dissTime + \frac{f}{n} \cdot E_3 \right). \numberthis \label{eqn}
\end{align*}
Assuming that once a correct process enters a new round, the timer calls \wishToAdvance within $4\dissTime + \Delta$, the expected time between $t^{\ell}$ and $t^{\ell+1}$ can be bounded as follows:
\vspace{-0.5em}
\begin{align*}
E_3 = \mathbb{E} \left[ t^{\ell+1} - t^\ell \right] & \leq \underbrace{ \mathbb{E} \left[ {  \substack{  \text{Time from $t^\ell$ until} \\ \text{ at least $f+1$ correct} \\ \text{processes enter \rMax{t^{\ell}}} \\ \text{or $t^{\ell +1}$ occurs}}}\right] }_{\text{\Cref{lemma:expectationOfStabilization}}} + \underbrace{4\dissTime + \Delta}_{\substack{\text{Time until at least $f+1$} \\ \text{correct processes call} \\ \text{\wishToAdvance in \rMax{t^\ell}}}} + \underbrace{ \mathbb{E} \left[ {  \substack{  \text{Time from the first} \\ \text{ time $f+1$ correct} \\ \text{processes call \wishToAdvance} \\ \text{in \rMax{t^{\ell}} and} \\ \text{until $t^{\ell+1}$ occurs}}}\right] }_{\text{\Cref{lemma:expectationOfProgress}}}  \\
& \leq \frac{3}{2} \cdot 6\dissTime + 4\dissTime + \Delta + \frac{3}{2} \cdot 6\dissTime = 22\dissTime + \Delta.
\end{align*}

\vspace{-0.5em}
The calculation of $E_3$ proves that in expectation, the time between any $t^\ell$ and $t^{\ell+1}$ is expected constant, assuming $\Delta$ is constant.
Therefore, $E_2$ is also expected constant.

To conclude, we proved that $E_2 \leq O(\dissTime)$ and $E_3 \leq O(\dissTime)$, and by \Cref{eqn}, $E_1 \leq O(\dissTime)$, as needed to prove expected constant latency.

For the message complexity of the synchronizer, note that since the expected time between two occurrences of round synchronization is expected constant, the message complexity is expected linear.
This is because for a given round the number of consecutive Byzantine relays until a correct one is expected constant, and in the algorithm, every process sends one message to the relay in each stage of the algorithm, and the relay responds with one message to all the processes.
Even if a process contacts more than one relay per round, it still contacts an expected constant number of relays, and therefore this does not hamper the asymptotic linear message complexity.
\end{proof}

\subsection{Relaxed Model} \label{sec:algs:relaxedModel}
As part of the model in~\Cref{sec:model} we assumed that the adversary is oblivious.
If the adversary is \emph{strong}, and knows the randomness $\mathcal{R}$ before choosing which processes to corrupt, a worst-case bound is tantamount to a deterministic one, because it holds for all coin flips.
Therefore, we cannot hope to get a linear message complexity for the worst-case.
Nevertheless, in a run with infinitely many round synchronization events, we can bound the \emph{average-case} expected latency and message complexity by considering the limit of the average latency and message complexity on prefixes of length $t$ of the run as $t$ tends to infinity.

Thus, a strong adversary who is aware of the relay function, can choose to corrupt, e.g., the first $f$ relays of some round $r$, causing that round to have linear latency and quadratic message complexity.
But since the adversary is static, it has to corrupt the same processes in all rounds, and by property \ref{prop:relayFunc:random}, this does not impact the average-case performance.

\section{Related Work} \label{sec:relatedWork}
Algorithms for the eventual synchrony model almost invariably use the notion of round or views~\cite{lamport2001paxos,oki1988viewstamped,keidar1996corel,birman1987exploiting}. 
A number of works have suggested frameworks and mechanisms for round synchronization in the benign case~\cite{awerbuch1985complexity,ford2019threshold,keidar2006timeliness,keidar2008howToChoose,gafni1998RRFD}.
For example, Awerbuch introduced synchronizers~\cite{awerbuch1985complexity} for failure-free networks.
TLC~\cite{ford2019threshold} places a barrier on round advancement, so that processes enter round $r+1$ only after a threshold of the processes entered round $r$.
Frameworks like RRFD~\cite{gafni1998RRFD} and GIRAF~\cite{keidar2006timeliness,keidar2008howToChoose} create a round-based structure for eventually synchronous and failure-detector based algorithms.

A related concurrent work due to Bravo et al.~\cite{bravo2020making} also tackles the liveness of consensus protocols, and creates a general framework to abstract the liveness part of consensus protocols.
They show that some protocols such as PBFT~\cite{castro1999practical} and HotStuff~\cite{yin2019hotstuff} can use this framework.
They also provide an algorithm for round synchronization that, starting from some round $r$, synchronizes all rounds $r' \geq r$ (even rounds with a Byzantine leader, in which decisions are not made), but unlike our algorithm, it  requires a quadratic communication cost per synchronization event.
They assume a relaxed network  model compared to us, where before \GST messages might be lost.

Several algorithms include two modes of operation: a normal mode where the leader is correct incurring linear message complexity, and a recovery mode when the leader is faulty and needs to be replaced incurring quadratic or higher message complexity.
For example, PABC~\cite{ramasamy2005parsimonious} achieves amortized linear  message complexity in an asynchronous atomic broadcast protocol, and Zyzzyva~\cite{kotla2007zyzzyva} implements a linear fast-track in an SMR algorithm.

Randomization is often used to solve consensus in asynchronous networks to circumvent the seminal FLP result~\cite{fischer1985impossibility}.
VABA~\cite{abraham2019asymptotically} is the first multi-value asynchronous consensus algorithm that achieves an expected quadratic message complexity against a strong adaptive adversary, and other works in the asynchronous model~\cite{cohen2020coinincidence,blum2020byzantine} achieve expected sub-quadratic communication complexity under various assumptions, but do not achieve $O(n)$ communication complexity.
In the context of Byzantine SMR, HotStuff~\cite{yin2019hotstuff} specified the round synchronization conditions needed for their algorithm, and abstracted it into a module that was left unspecified.
Our work provides the round synchronization they require.

Our algorithm builds on ideas presented in Cogsworth~\cite{naor2019cogsworth}, but Cogsworth achieved expected linear message complexity only in the benign case, whereas in the Byzantine case its message complexity was still expected quadratic. 
In Cogsworth, it is enough that the first relay of a round is Byzantine to create a run with a quadratic number of messages to synchronize for that round.
Since the probability that the first relay is Byzantine is constant, \ie, $f/n$, the overall message complexity under Byzantine failures is expected quadratic.

To reduce the expected message complexity to linear, we modified Cogsworth in a number of ways, including adding another phase to the algorithm, signing each message from a process to a relay with the relay it is intended for, and adding a ``helping'' mechanism to help processes ``catch-up'' to the latest round.
By incorporating these ideas into our algorithm, we managed to bring the expected message complexity down to linear.

\section{Conclusion} \label{sec:conclusion}
We presented an algorithm that reduces the expected message complexity of round synchronization to linear with an expected constant latency.
Combined with algorithms like HotStuff, this yields, for the first time, Byzantine SMR with the same asymptotic performance, as round synchronization is the ``bottleneck'' in previous Byzantine SMR algorithms.
While we achieve only expected sub-quadratic complexity, we note that achieving the same complexity in the worst-case is known to be impossible~\cite{dolev1985bounds}, and so cannot be improved.


\bibliography{references}

\begin{thebibliography}{10}

\bibitem{abraham2019asymptotically}
Ittai Abraham, Dahlia Malkhi, and Alexander Spiegelman.
\newblock Asymptotically optimal validated asynchronous byzantine agreement.
\newblock In {\em Proceedings of the 2019 ACM Symposium on Principles of
  Distributed Computing}, pages 337--346, 2019.

\bibitem{andrychowicz2014secure}
Marcin Andrychowicz, Stefan Dziembowski, Daniel Malinowski, and Lukasz Mazurek.
\newblock Secure multiparty computations on bitcoin.
\newblock In {\em 2014 IEEE Symposium on Security and Privacy}, pages 443--458.
  IEEE, 2014.

\bibitem{awerbuch1985complexity}
Baruch Awerbuch.
\newblock Complexity of network synchronization.
\newblock {\em Journal of the ACM (JACM)}, 32(4):804--823, 1985.

\bibitem{baudet2019librabft}
Mathieu Baudet, Avery Ching, Andrey Chursin, George Danezis, Fran{\c{c}}ois
  Garillot, Zekun Li, Dahlia Malkhi, Oded Naor, Dmitri Perelman, and Alberto
  Sonnino.
\newblock State machine replication in the libra blockchain, 2019.

\bibitem{ben1994power}
Shai Ben-David, Allan Borodin, Richard Karp, Gabor Tardos, and Avi Wigderson.
\newblock On the power of randomization in on-line algorithms.
\newblock {\em Algorithmica}, 11(1):2--14, 1994.

\bibitem{birman1987exploiting}
Ken Birman and Thomas Joseph.
\newblock Exploiting virtual synchrony in distributed systems.
\newblock In {\em Proceedings of the eleventh ACM Symposium on Operating
  systems principles}, pages 123--138, 1987.

\bibitem{blum2020byzantine}
Erica Blum, Jonathan Katz, Chen-Da Liu-Zhang, and Julian Loss.
\newblock Asynchronous byzantine agreement with subquadratic communication.
\newblock Cryptology ePrint Archive, Report 2020/851, 2020.

\bibitem{blum1983coin}
Manuel Blum.
\newblock Coin flipping by telephone a protocol for solving impossible
  problems.
\newblock {\em ACM SIGACT News}, 15(1):23--27, 1983.

\bibitem{boneh2001short}
Dan Boneh, Ben Lynn, and Hovav Shacham.
\newblock Short signatures from the weil pairing.
\newblock In {\em International Conference on the Theory and Application of
  Cryptology and Information Security}, pages 514--532. Springer, 2001.

\bibitem{borodin1992optimal}
Allan Borodin, Nathan Linial, and Michael~E Saks.
\newblock An optimal on-line algorithm for metrical task system.
\newblock {\em Journal of the ACM (JACM)}, 39(4):745--763, 1992.

\bibitem{bracha1987asynchronous}
Gabriel Bracha.
\newblock Asynchronous {B}yzantine agreement protocols.
\newblock {\em Information and Computation}, 75(2):130--143, 1987.

\bibitem{bravo2020making}
Manuel Bravo, Gregory Chockler, and Alexey Gotsman.
\newblock Making byzantine consensus live.
\newblock In {\em 34st International Symposium on Distributed Computing (DISC
  2020)}. Schloss Dagstuhl-Leibniz-Zentrum fuer Informatik, 2020.

\bibitem{buchman2018tendermint}
Ethan Buchman, Jae Kwon, and Zarko Milosevic.
\newblock The latest gossip on {BFT} consensus.
\newblock {\em arXiv preprint arXiv:1807.04938}, 2018.

\bibitem{buterin2017casper}
Vitalik Buterin and Virgil Griffith.
\newblock Casper the friendly finality gadget.
\newblock {\em arXiv preprint arXiv:1710.09437}, 2017.

\bibitem{cachin2005random}
Christian Cachin, Klaus Kursawe, and Victor Shoup.
\newblock Random oracles in {C}onstantinople: Practical asynchronous byzantine
  agreement using cryptography.
\newblock {\em Journal of Cryptology}, 18(3):219--246, 2005.

\bibitem{castro1999practical}
Miguel Castro, Barbara Liskov, et~al.
\newblock Practical {B}yzantine fault tolerance.
\newblock In {\em OSDI}, volume~99, pages 173--186, 1999.

\bibitem{cleve1986limits}
Richard Cleve.
\newblock Limits on the security of coin flips when half the processors are
  faulty.
\newblock In {\em Proceedings of the eighteenth annual ACM symposium on Theory
  of computing}, pages 364--369, 1986.

\bibitem{cohen2020coinincidence}
Shir Cohen, Idit Keidar, and Alexander Spiegelman.
\newblock Not a coincidence: Sub-quadratic asynchronous byzantine agreement
  whp.
\newblock In {\em 34st International Symposium on Distributed Computing (DISC
  2020)}. Schloss Dagstuhl-Leibniz-Zentrum fuer Informatik, 2020.

\bibitem{dolev1985bounds}
Danny Dolev and Ruediger Reischuk.
\newblock Bounds on information exchange for byzantine agreement.
\newblock In {\em Proceedings of the First ACM SIGACT-SIGOPS Symposium on
  Principles of Distributed Computing}, PODC '82, page 132–140, New York, NY,
  USA, 1982. Association for Computing Machinery.

\bibitem{dwork1988consensus}
Cynthia Dwork, Nancy Lynch, and Larry Stockmeyer.
\newblock Consensus in the presence of partial synchrony.
\newblock {\em Journal of the ACM (JACM)}, 35(2):288--323, 1988.

\bibitem{fiat2002competitive}
Amos Fiat, Richard Karp, Mike Luby, Lyle McGeoch, Daniel Sleator, and Neal~E
  Young.
\newblock Competitive paging algorithms.
\newblock {\em arXiv preprint cs/0205038}, 2002.

\bibitem{fischer1985impossibility}
Michael~J Fischer, Nancy~A Lynch, and Michael~S Paterson.
\newblock Impossibility of distributed consensus with one faulty process.
\newblock {\em Journal of the ACM (JACM)}, 32(2):374--382, 1985.

\bibitem{ford2019threshold}
Bryan Ford.
\newblock Threshold logical clocks for asynchronous distributed coordination
  and consensus.
\newblock {\em arXiv preprint arXiv:1907.07010}, 2019.

\bibitem{gafni1998RRFD}
Eli Gafni.
\newblock Round-by-round fault detectors (extended abstract) unifying synchrony
  and asynchrony.
\newblock In {\em Proceedings of the seventeenth annual ACM symposium on
  Principles of distributed computing}, pages 143--152, 1998.

\bibitem{gueta2019sbft}
Guy~Golan Gueta, Ittai Abraham, Shelly Grossman, Dahlia Malkhi, Benny Pinkas,
  Michael Reiter, Dragos-Adrian Seredinschi, Orr Tamir, and Alin Tomescu.
\newblock {SBFT}: a scalable and decentralized trust infrastructure.
\newblock In {\em 2019 49th Annual IEEE/IFIP International Conference on
  Dependable Systems and Networks (DSN)}, pages 568--580. IEEE, 2019.

\bibitem{keidar1996corel}
Idit Keidar and Danny Dolev.
\newblock Efficient message ordering in dynamic networks.
\newblock In {\em Proceedings of the fifteenth annual ACM symposium on
  Principles of distributed computing}, pages 68--76, 1996.

\bibitem{keidar2006timeliness}
Idit Keidar and Alexander Shraer.
\newblock Timeliness, failure-detectors, and consensus performance.
\newblock In {\em Proceedings of the twenty-fifth annual ACM symposium on
  Principles of distributed computing}, pages 169--178, 2006.

\bibitem{keidar2008howToChoose}
Idit Keidar and Alexander Shraer.
\newblock How to choose a timing model.
\newblock {\em IEEE Transactions on Parallel and Distributed Systems},
  19(10):1367--1380, 2008.

\bibitem{kotla2007zyzzyva}
Ramakrishna Kotla, Lorenzo Alvisi, Mike Dahlin, Allen Clement, and Edmund Wong.
\newblock Zyzzyva: speculative byzantine fault tolerance.
\newblock In {\em ACM SIGOPS Operating Systems Review}, volume~41, pages
  45--58. ACM, 2007.

\bibitem{lamport2001paxos}
Leslie Lamport et~al.
\newblock Paxos made simple.
\newblock {\em ACM Sigact News}, 32(4):18--25, 2001.

\bibitem{moran2009optimally}
Tal Moran, Moni Naor, and Gil Segev.
\newblock An optimally fair coin toss.
\newblock In {\em Theory of Cryptography Conference}, pages 1--18. Springer,
  2009.

\bibitem{naor2019cogsworth}
Oded Naor, Mathieu Baudet, Dahlia Malkhi, and Alexander Spiegelman.
\newblock Cogsworth: Byzantine view synchronization.
\newblock In {\em Proceedings of the Cryptoeconomic Systems Conference
  (CES'20)}, 2020.

\bibitem{oki1988viewstamped}
Brian~M Oki and Barbara~H Liskov.
\newblock Viewstamped replication: A new primary copy method to support
  highly-available distributed systems.
\newblock In {\em Proceedings of the seventh annual ACM Symposium on Principles
  of distributed computing}, pages 8--17. ACM, 1988.

\bibitem{ramasamy2005parsimonious}
HariGovind~V Ramasamy and Christian Cachin.
\newblock Parsimonious asynchronous byzantine-fault-tolerant atomic broadcast.
\newblock In {\em International Conference On Principles Of Distributed
  Systems}, pages 88--102. Springer, 2005.

\bibitem{shoup2000practical}
Victor Shoup.
\newblock Practical threshold signatures.
\newblock In {\em International Conference on the Theory and Applications of
  Cryptographic Techniques}, pages 207--220. Springer, 2000.

\bibitem{yin2019hotstuff}
Maofan Yin, Dahlia Malkhi, MK~Reiter~and, Guy~Golan Gueta, and Ittai Abraham.
\newblock Hot{S}tuff: {BFT} consensus with linearity and responsiveness.
\newblock In {\em 38th ACM symposium on Principles of Distributed Computing
  (PODC'19)}, 2019.

\end{thebibliography}
\clearpage

\end{document}